\newcolumntype{C}{>{$}c<{$}}
\setlist[itemize]{nosep}
\setlist[enumerate]{nosep}
\newcommand{\RR}{\mathbb{R}}
\newcommand{\NN}{\mathbb{N}}
\newcommand{\calA}{\mathcal{A}}
\newcommand{\calC}{\mathcal{C}}
\newcommand{\calV}{\mathcal{V}}
\newcommand{\lis}{\boldsymbol\ell}
\newcommand{\dis}{\mathbf{d}}
\newcommand{\gen}{\mathbf{g}}
\newtheorem{theorem}{Theorem}
\newtheorem{lemma}{Lemma}
\begin{document}
	
\title{Proportionality in Multiple Dimensions to Design Electoral Systems
\vspace{.3cm}
}
	\author{Javier Cembrano
	\thanks{Department of Algorithms and Complexity, Max Planck Institute for Informatics. {\tt jcembran@mpi-inf.mpg.de}}
	\and Jos\'e Correa			
	\thanks{Department of Industrial Engineering, Universidad de Chile. {\tt correa@uchile.cl}}
	\and Gonzalo D\'iaz			
	\thanks{SC Johnson Graduate School of Management, Cornell University. {\tt gd387@cornell.edu}}
	\and Victor Verdugo
	\thanks{Institute for Mathematical and Computational Engineering, PUC Chile. {\tt victor.verdugo@uc.cl}}
        \thanks{Department of Industrial and Systems Engineering, School of Engineering, PUC Chile.}
	}

\date{}
\maketitle

\begin{abstract}
    How to elect the representatives in legislative bodies is a question that every modern democracy has to answer. This design task has to consider various elements so as to fulfill the citizens' expectations and contribute to the maintenance of a healthy democracy. The notion of proportionality, in that the support of a given idea in the house should be nearly proportional to its support in the general public, lies at the core of this design task. In the last decades, demographic aspects beyond political support have been incorporated by requiring that they are also fairly represented in the body, giving rise to a multidimensional version of the apportionment problem. In this work, we provide an axiomatic justification for a recently proposed notion of multidimensional proportionality and extend it to encompass two relevant constraints often used in electoral systems: a threshold on the number of votes that a list needs in order to be eligible and the election of the most-voted candidate in each district. We then build upon these results to design methods based on multidimensional proportionality. We use the Chilean Constitutional Convention election (May 15-16, 2021) results as a testing ground---where the dimensions are given by political lists, districts, and genders---and compare the apportionment obtained under each method according to three criteria: proportionality, representativeness, and voting power. While local and global methods exhibit a natural trade-off between local and global proportionality, including the election of most-voted candidates on top of methods based on $3$-dimensional proportionality allows us to incorporate both notions while ensuring higher levels of representativeness and a balanced voting power.
\end{abstract}

\thispagestyle{empty}
\newpage
\setcounter{page}{1}

\section{Introduction}

The idea of proportionality is fundamental in mathematics and has been present in a wide variety of applications in many different disciplines.
We usually say that two sets of values are proportional if one can be obtained by scaling the other by a common value.
However, proportionality can be easily extended to the case where more than one dimension is involved, giving rise to the so-called \textit{tensor scaling}. 
This has been widely studied, achieving both a theoretical understanding of it and efficient algorithms for performing it (see, e.g., \cite{allen2017much, deming1940ras, rote2007matrix}).
Applications of this progress are varied and include, among others, transportation planning \citep{EVANS197019}, contingency table analysis \citep{deming1940ras}, national account systems in economies \citep{stone1964multiple}, teletraffic data transport \citep{kruithof1937telefoonverkeersrekening, krupp1979projection} and stochastic matrices \citep{sinkhorn1964relationship}. 
Tensor scaling not only naturally extends the usual notion of proportionality, but also has very strong and natural properties that one would expect of such procedure, as it has been observed by \citet{aumann1985game}, \citet{balinski1989axiomatic}, and \citet{young1987dividing}, among others.

For several applications, an additional property of the solution turns out to be relevant: integrality.
Think, for example, of the case where some population is simultaneously split according to some demographic features, and a certain amount of people among each group is to be selected ``proportionally'' to the relative presence of such group within the total population.\footnote{For instance, this could be done with the purpose of performing some poll or census.}
Another example, that motivates this work and will be developed in further detail, is the election of a deliberative assembly, where the candidates participate as part of a given list in a particular district and belong to a demographic group that is to be proportionally represented in the assembly. 
In these situations, we want to obtain integer values proportional to some input quantities.
The challenge of extending proportionality to this setting gives rise to the so-called divisor methods, initially introduced in the 18th century for the elections of the recently created U.S. House of Representatives and widely used in elections until today.
This notion of proportionality, based on scaling the input numbers by a common value and rounding the result, was extended to the case of two dimensions by \citet{balinski1989axiomatic}.
In recent work, \citet{cembrano2022multidimensional} showed that it is possible to further extend the same approach to the case of arbitrary dimension, as long as we allow some constant deviations on the total number assigned to each group.
These methods use the same approach as divisor methods and constitute a natural way of adapting tensor scaling to the case where integer values are required.

In this work, we focus on the applications of multidimensional proportionality to elections of representatives, in the so-called apportionment problem.
The main motivation is the growing need to incorporate dimensions beyond the classic political and geographical aspects in elections, as modern societies become more complex. 
The increasing need for such methods has driven increasing attention to the subject in the fields of social choice, mathematics, and computer science \citep{flanigan2021fair, lang2018multi, pukelsheim2017proportional}.
In practice, New Zealand's parliament has included ethnic representation for more than 50 years, while Bosnia and Herzegovina's Parliament was proposed to include a division of three types of ``Constituent People'': Bosniacs, Croats, and others \citep{demange2013allocating}. 
A recent case, which we introduce in more detail below and use as a running example throughout the paper, is provided by the election of the Chilean Constitutional Convention in 2021.

After a civil unrest broke out in Chile on October 18, 2019, a political agreement to write a new political constitution was achieved. The agreement was reaffirmed by a referendum approved with 78\% of the votes on October 25, 2020, and the voting process to elect the members of the upcoming Constitutional Convention took place on May 15-16, 2021.
This election drew particular attention due to two constraints on the representative body to be elected. 
First, 17 of the 155 seats of the convention were reserved for 10 different ethnic groups.\footnote{In Chile, around two million people identify themselves as part of these groups, according to the 2017 census.}
Second, a remarkable feature of the election was the incorporation of gender balance in the convention as a way to ensure the proportional representation of women, historically underrepresented in Chilean politics: In the 2021 Parliament election, they achieved their maximum historical representation by taking 34\% of the seats in the lower house. 
This is a reflection of a global situation: The average percentage of women in parliaments around the world has increased from 11.3\% to 26.9\% between 1995 and 2023 according to the \citet{ipu2024} which is still far from a balanced representation. In response to this, various nations like Germany, Belgium, Spain, and more recently, Chile, have tried to promote gender parity in the legislative power through different correction mechanisms, mostly regarding party-level candidate quotas, both in proportional and majoritarian electoral methods.

\paragraph{Our contribution.}
In this paper, we study and develop methods to handle proportionality in multiple dimensions when electing a house of representatives. These methods build upon the notion of multidimensional proportionality introduced by \citet{cembrano2022multidimensional}, which naturally extends that for one and two dimensions \citep{balinski2010fair,balinski1989axiomatic} and is based on scaling each element of each dimension by a multiplier and rounding the result. We begin by showing two results that support the use of this notion of multidimensional proportionality in real-life elections. On the one hand, we prove that it is the (integral version of) the unique notion of multidimensional proportionality that satisfies appropriate versions of three widely-studied axioms in the social choice and apportionment literature: exactness, consistency, and homogeneity. This result extends that of \citet{balinski1989axiomatic} for the $2$-dimensional case. On the other hand, we show that it is possible to extend this notion of multidimensional proportionality to the case where there are lower or upper bounds on the number of seats a specific tuple of elements should receive. These bounds may come, for instance, from the fact that a party must receive at least one seat in a certain district due to having the top-voted candidate, or no seats due to not reaching a certain threshold of votes. Similarly to \citet{cembrano2022multidimensional}, we show that the existence of proportional apportionments deviating up to a constant $\alpha_i$ from the prespecified number of seats in each dimension $i\in \{1,\ldots,d\}$ is guaranteed, provided the feasibility of a linear program and the inequality $\sum_{i=1}^d {1}/{(\alpha_i+2)} \leq 1$.

We then exploit the aforementioned results to propose and study several electoral methods based on $3$-dimensional proportionality (abbreviated as $3$-proportionality in what follows). They are defined in the context of the election of the Chilean Constitutional Convention, so that the dimensions correspond to political lists, where each list should receive a number of seats proportional to its votes, geographic distribution, where each district should receive a number of seats prespecified by law, and gender, where each gender should receive half of the seats. However, we remark that our methods can be directly applied in other contexts.
Besides the method obtained by directly applying the notion of multidimensional proportionality of \citet{cembrano2022multidimensional}, we introduce two additional features in light of our theoretical results: a threshold over the votes obtained by a list in order to be eligible and the election of the top-voted candidate of each district (also known as \emph{plurality} constraints). These features aim to limit the segmentation of political parties and to incorporate a majoritarian component in proportional representation methods, respectively, and are often used together in countries such as Germany, New Zealand, the Republic of Korea, and Mexico. These countries used thresholds of either 3\% or 5\%; we employ the former in our simulations.
As relevant findings, we observe that deviations from the prescribed number of seats in $3$-proportional methods are significantly smaller than the worst-case theoretical guarantees. 
In addition, $3$-proportional methods pose a fundamental trade-off between local and global proportionality and induce a political distribution closer to the well-known principle of \emph{one person, one vote} compared to methods that operate locally. 
Overall, the election of top-voted candidates combined with a threshold over the votes obtained by a list appears as a reasonable midpoint while ensuring high levels of representativeness, in terms of the average number of votes of the elected candidates.
 
\paragraph{Further related work.}
Tensor scaling, in particular the two-dimensional version, has been widely studied since the seminal work by \citet{sinkhorn1964relationship}; some relevant works are those by \citet{sinkhorn1967concerning}, \citet{rothblum1989scalings}, and \citet{nemirovski1999complexity}. Recent works improving the analysis of known algorithms and developing faster ones include those by \citet{allen2017much, chakrabarty2020better, cohen2017matrix}. Particularly relevant for our framework, where integrality is required, are the axiomatic and algorithmic results by \citet{balinski1989algorithms, balinski1989axiomatic}; their algorithm was then analyzed by \citet{kalantari2008complexity}. For an overview of this area, we refer the reader to the survey by \citet{idel2016review}.

The apportionment problem, and in particular divisor methods, have been extensively studied from disciplines such as operations research, computer science, and political science; the books by \citet{balinski2010fair} and \citet{pukelsheim2017proportional} provide historical and theoretical surveys of this topic.
In their seminal work, \citet{balinski1989algorithms, balinski1989axiomatic} extended the notion of proportionality and divisor methods to the case in which the apportionment is ruled by two dimensions, e.g., political parties and geographic districts, studying this extension from an axiomatic and algorithmic point of view. \citet{balinski2008fair}, as well as \citet{balinski1997mexican},
proposed variants of this method, while \citet{maier2010divisor} conducted a real-life benchmark study of biproportional apportionment and its variants, including the election of top-voted candidates in each district. \citet{rote2007matrix} and \citet{gaffke2008divisor,gaffke2008vector} studied biproportional apoortionment via network flow formulations, a tool studied in more general electoral settings by
\citet{pukelsheim2012network}. 
Recently, \citet{cembrano2022multidimensional} extended the biproportional approach to an arbitrary number of dimensions using tools from discrepancy theory, showing that, although proportional apportionments do not always exist, they do if small deviations are allowed. Other recent works regarding the use of tools from optimization and rounding in the development of (randomized) apportionment methods include those by \citet{golz2022apportionment} and \citet{correa2024monotone}.
New perspectives on social choice and apportionment methods have also been studied by \citet{balinski2007theory, balinski2011majority} and \citet{serafini2020mathematics}.

\section{Preliminaries}
\label{sec:prelims}

In the one-dimensional apportionment problem, we are given a vector $\calV\in \NN^n$ containing the votes obtained by each party $i\in \{1,\ldots,n\}$ and a house size $H\in \NN$. The goal is to find an allocation of seats to parties that is proportional to the votes. We let $f^1(\calV, H)=\lambda\calV$ denote the \textit{$1$-dimensional fair share} for input $(\calV,H)$, where $\lambda= H/\sum_{i=1}^{n} \calV_i$. 
Based on divisor methods---which constitute our notion of integral proportionality in what follows---we let $\calA^1(\calV,H)$ be the set of \textit{\mbox{$1$-dimensional} proportional apportionments}, which are the integral vectors obtained by scaling $\calV$ by a multiplier and rounding, i.e., the set of vectors $x\in \NN^n$ satisfying
\begin{align}
    x_i & \in \llbracket \lambda\calV_i \rrbracket \qquad \text{for every } i\in \{1,\ldots,n\} \text{ and}\label{eq:prop-1d}\\
    \sum_{i=1}^n x_i & = H \label{eq:house-1d}
\end{align}
for some $\lambda>0$,
where $\llbracket\cdot \rrbracket$ represents a rounding rule. Throughout this paper, we focus for the sake of simplicity on the Jefferson/D'Hondt method and its generalizations to multiple dimensions, so that $\llbracket\cdot \rrbracket$ will denote the downwards rounding rule.\footnote{However, our theoretical results are valid for any other rounding rule.} More precisely, we define $\llbracket 0\rrbracket = \{0\}$, $\llbracket t\rrbracket = \{r\}$ when $t\in (r,r+1)$ and $\llbracket t\rrbracket = \{r-1,r\}$ when $t=r>0$.\footnote{This set-valued rounding is needed to ensure the existence of proportional apportionments.}
When vectors $I,U\in \NN^n$ such that $\sum_{i=1}^n I_i \leq H \leq \sum_{i=1}^n U_i$ are given as well, we let $\calA^1(\calV,H,I,U)$ be the set of vectors $x\in \NN^n$ satisfying \eqref{eq:house-1d} and
\begin{align}
    x_i & \in \text{mid}\left( \llbracket \lambda\calV_i \rrbracket \cup \{ I_i, U_i \}\right) \qquad \text{for every } i\in \{1,\ldots,n\}\label{eq:prop-1d-bounds}
\end{align}
for some $\lambda>0$, where $\text{mid}(S)$ represents the median value of $S$, for $S\subset \RR$. Note that, in a slight abuse of notation, fixing $I_i=0$ and $U_i=H$ for each $i$ we have $\calA^1(\calV,H) = \calA^1(\calV,H,I,U)$ for every $\calV\in \NN^n$ and $H\in \NN$.
The described apportionments are guaranteed to exist and can be found through fast combinatorial algorithms or linear programming \citep{balinski2010fair,reitzig2023simple}.

The basic definition of one-dimensional apportionments---without the presence of component-wise bounds---was adapted to two dimensions by \citet{balinski1989algorithms, balinski1989axiomatic} and to three and more dimensions by \citet{cembrano2022multidimensional}; we here introduce the solution concept in full generality and discuss its existence and restrictions later on. In the $d$-dimensional apportionment problem, we are given sets $N_1,\ldots,N_d$ and an integer matrix $\calV\in \NN^{\prod_{i=1}^{d}N_i}$
containing the votes of the candidates of each tuple, 
and we define $E(\calV)=\big\{ e\in \prod_{i=1}^{d}N_{i}: \calV_e>0\big\}$.\footnote{This definition will make the notation easier given the assumption that any tuple $e$ with $\calV_e=0$ must not receive any seat, consistently with any definition of a rounding rule.} We are also given values $m_v\in \NN$ for every $v\in N_{i}$ and every $i\in \{1,\ldots,d\}$ determining the number of seats to be assigned to each element,
values $I_e, U_e\in \NN$ for every $e\in E(\calV)$ restricting the number of seats to be assigned to each tuple,
and a house size $H\in \NN$, satisfying
\begin{align}
    \sum_{e\in E(\calV): e_i=v} I_e \leq m_v & \leq \sum_{e\in E(\calV): e_i=v} U_e \qquad \text{for every } i\in \{1,\ldots,d\} \text{ and } v\in N_i, \text{ and} \label{eq:marginals-bounds}\\
    \sum_{v\in N_i}{m_v} & = H \qquad \text{for every } i\in \{1,\ldots,d\}.\label{eq:marginals-house-size}
\end{align}
As the natural extension of the notion of fractional proportionality to the case of multiple dimensions, we say that a tensor $x$ is a \textit{$d$-dimensional fair share} if, for every $i\in \{1,\ldots,d\}$ and $v\in N_i$, there exists a strictly positive value $\lambda_{v}$ such that the following conditions hold:
\begin{align}
\sum_{e\in E(\calV): e_i=v}{x_e} & = m_v\qquad\text{for every }i\in \{1,\ldots,d\} \text{ and } v\in N_i, \text{ and}\label{eq:triprop1-frac}\\
x_{e}&=  \calV_e \prod_{i=1}^{d}\lambda_{e_i} \qquad \text{for every } e\in E(\calV).\label{eq:triprop2-frac}
\end{align}
The existence and uniqueness of such a tensor are guaranteed by standard convex programming techniques; we denote it by $f^d(\calV,m,H)$.
As in the one-dimensional case, but allowing small deviations $\alpha\in \NN^d$ from the prespecified sums, we say that an integer tensor $x$ is an \textit{$\alpha$-approximate $d$-proportional apportionment} if, for every $i\in \{1,\ldots,d\}$ and $v\in N_i$, there exists a strictly positive value $\lambda_{v}$ such that the following conditions hold:
\begin{align}
m_v-\alpha_i&\le \sum_{e\in E(\calV): e_i=v}{x_e}\le m_v+\alpha_i \qquad\text{for every }i\in \{1,\ldots,d\} \text{ and } v\in N_i,\label{eq:triprop1}\\
x_{e}&\in \text{mid} \left( \left\llbracket \calV_e \prod_{i=1}^{d}\lambda_{e_i} \right\rrbracket \cup \{ I_e, U_e \} \right) \qquad \text{for every } e\in E(\calV).\label{eq:triprop2}
\end{align}
We denote the set of tensors $x$ satisfying \eqref{eq:triprop1}-\eqref{eq:triprop2} by $\calA^d(\calV,m,H,I,U;\alpha)$. If a tensor $x$ satisfies this definition with $\alpha_i=0$ for every $i\in \{1,\ldots,d\}$, we just say that $x$ is a $d$-proportional apportionment for this instance, and denote the set of such tensors simply as $\calA^d(\calV,m,H,I,U)$. Further, when $I_e=0$ and $U_e=H$ for every $e\in E(\calV)$, we omit these arguments as before. \citet[Theorem 4]{cembrano2022multidimensional} proved that, under mild conditions,\footnote{These conditions are given by the feasibility of a linear program similar to the one we introduce in Section \ref{sec:theory}.} an $\alpha$-approximate $d$-proportional apportionment is guaranteed to exist whenever $I_e=0$ and $U_e=H$ for every $e\in E(\calV)$ and $\sum_{i=1}^{d}1/(\alpha_i+2)\leq 1,$ and it can be found by using linear programming techniques. For instance, when $d=3$ the above condition is satisfied when $\alpha_1=\alpha_2=\alpha_3=1$, or when $\alpha_1=0$, $\alpha_2=1$ and $\alpha_3=4$. In Section \ref{sec:theory}, we extend this existence result to the case of arbitrary vectors $I$ and $U$.

We often need to aggregate the components or restrict to a certain entry of a $3$-dimensional tensor in our methods; we write a subindex ``+'' for the former case and the subindex with the entry for the latter. We write a subindex $\cdot$ when a certain dimension is preserved. For example, for a tensor $\calV\in D\times L \times G$ and a fixed $d\in D$, $\calV_{d,\cdot,+}$ represents a vector in $\NN^{|L|}$ whose $\ell$-th component is $\sum_{g\in G}\calV_{d\ell g}$.

\section{Theoretical Results}\label{sec:theory}

In this section, we present our theoretical contribution. It consists of two extensions of previous results that we consider relevant for a better understanding and application of apportionment methods based on multidimensional proportionality. The first one, introduced in Section \ref{subsec:axiomatic}, is a characterization of the multidimensional fair share tensor as the only one satisfying a set of three natural axioms, which generalizes the result by \citet{balinski1989axiomatic} for $d=2$ and supports our notion of multidimensional proportionality. The second one, presented in Section \ref{subsec:plurality}, states the existence of approximate $d$-dimensional apportionments with arbitrary component-wise bounds, thus extending the result by \citet{cembrano2022multidimensional} and providing the basis for the application of our methods defined in Sections \ref{subsubsec:TPP} and \ref{subsubsec:TPP3}.

\subsection{Multidimensional Proportional Allocations: An Axiomatic Approach}
\label{subsec:axiomatic}

In this section, we provide an axiomatic characterization of the $d$-dimensional fair share for arbitrary $d$, thereby justifying the notion of integral proportionality in multiple dimensions we stick to, which can be understood as a rounding of fractional proportional tensors.

Given sets $N_1,\ldots,N_d$, values $m_v$ for every $i\in \{1,\ldots,d\}$ and $v\in N_{i}$, and $H\in \NN$, we define the region of feasible allocations as the set
$$R(m,H) = \Bigg\{f\in  \RR^{\prod_{i=1}^{d}N_{i}}:\sum_{\substack{e\in \prod_{i=1}^{d}N_{i}:e_i=v}} f_{e} = m_v \ \text{ for every }i\in \{1,\ldots,d\} \text{ and } v\in N_i\Bigg\}.$$
An \emph{allocation method} $F$ is a mapping from an instance given by an integer vote matrix $\calV$ with entries in $\prod_{i=1}^{d}N_{i}$, values $m_v$ for every $i\in \{1,\ldots,d\}$ and $v\in N_{i}$, and $H\in \NN$, to a nonempty subset $F(\calV,m,H)$ of $R(m,H)$. We need some additional notation before introducing the axioms. Consider $f\in F(\calV,m,H)$ and subsets $N'_1\subseteq N_1,\ldots, N'_{d}\subseteq N_d$. 
Let $E'=\Pi_{i=1}^{d}{N'_i}$, let $y_{E'}$ be the subtensor of $y$ restricted to $E'$, and let $m'(f)$ be such that $(m'(f))_v$ is defined as follows for each $i\in \{1,\ldots,d\}$ and $v\in N'_i$:
$$(m'(f))_v = \sum_{e\in E':e_i=v}{f_e} = m_v - \sum_{e\not\in E': e_i=v}{f_e}.$$
We now formally introduce three axioms for allocation methods.

\begin{enumerate}[label=(\Roman*)]
    \item {\bf (Exactness)} If there exists $\delta\in \RR$ such that $\delta \calV \in R(m,H)$, then $F(\calV,m,H)=\{ \delta \calV \}$.\label{axiom:exact}
    \item {\bf (Consistency)} If $f\in F(\calV,m,H)$, the following holds: \label{axiom:consistency}
        \begin{enumerate}[label=(\alph*)]
        \item For every family of sets $N'_1\subseteq N_1,\ldots, N'_{d}\subseteq N_d$, we have that
        \[
            f_{E'}\in  F\left(\calV_{E'},m'(f),\sum_{e\in E'} f_{e}\right).
        \]
        \label{item:cons-1}
        \item If $g\in F(\calV_{E'},m'(f),\sum_{e\in E'} f_{e})$, then the matrix defined to be equal to $g$ on $E'$ and $f$ elsewhere belongs to $F(\calV,m,H)$.\label{item:cons-2}
        \end{enumerate}
    \item {\bf (Homogeneity)} Suppose that $i\in \{1,\ldots,d\}$ and $v_1,v_2\in N_i$ are such that $\calV_{e}=\calV_{e'}$ for every pair $e,e'\in E$ with $e_i=v_1,\ e'_i=v_2$, and $e_j=e'_j$ for every $j\not=i$. Furthermore, suppose that $m_{v_1}=m_{v_2}$. Then, for every $f\in F(\calV,m,H)$ it holds $f_{e}=f_{e'}$ for every $e,e'\in E$ with $e_i=v_1,\ e'_i=v_2$, and $e_j=e'_j$ for every $j\not=i$.\label{axiom:homogen}\\
\end{enumerate}
We denote by $F^*$ the allocation method that maps every instance $(\calV,m,H)$ to a singleton containing its fair share tensor $f^d(\calV,m,H)$. 
The following theorem provides an axiomatic characterization of this method, extending that by \citet{balinski1989axiomatic} for the case of $d=2$.

\begin{theorem}
\label{thm:fair-share}
$F^*$ is the unique allocation method satisfying exactness, consistency, and homogeneity.
\end{theorem}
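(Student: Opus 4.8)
The plan is to prove the two inclusions separately: first that $F^*$ satisfies the three axioms, and then that any method satisfying them must coincide with $F^*$.

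For the easy direction I would check the axioms directly from the multiplicative form $\big(f^d(\calV,m,H)\big)_e=\calV_e\prod_{i=1}^d\lambda_{e_i}$ together with uniqueness of the fair share. Exactness: if $\delta\calV\in R(m,H)$ (necessarily with $\delta>0$), taking every multiplier equal to $\delta^{1/d}$ exhibits $\delta\calV$ as a tensor of multiplicative form meeting the margins, so by uniqueness $f^d(\calV,m,H)=\delta\calV$. Homogeneity: if slices $v_1,v_2$ along coordinate $i$ carry identical votes and $m_{v_1}=m_{v_2}$, then the transposition $v_1\leftrightarrow v_2$ leaves the instance unchanged and sends a fair share to a fair share, so by uniqueness it fixes $f^d(\calV,m,H)$, which is exactly the asserted equality of entries. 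Consistency: the restriction of a multiplicative tensor to $E'=\prod_i N'_i$ is again multiplicative with the same multipliers and, by the very definition of $m'(f)$, has those induced margins; hence it is the fair share of the subinstance, giving part (a), and since $F^*$ is single-valued the only admissible $g$ in part (b) is $f_{E'}$ itself, so the glued tensor is again $f$.

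For uniqueness, fix a method $F$ satisfying the axioms, an instance $(\calV,m,H)$, and any $f\in F(\calV,m,H)$; the goal is $f=f^d(\calV,m,H)$. I would argue by reduction to two active dimensions. For any pair $i\neq j$ and any fixed values of the remaining coordinates, consistency (a) applied with all $N'_k$ singletons for $k\notin\{i,j\}$ shows that the corresponding planar slice of $f$ lies in $F$ of the induced two-dimensional subinstance. Assuming the theorem in the case $d=2$, this forces $f$ to be biproportional on every coordinate plane. The per-plane multipliers agree on shared lines by uniqueness of the planar fair share (a \emph{curl-free} condition on $\log(f_e/\calV_e)$), and can therefore be assembled into global multipliers $\lambda_{v}$ with $f_e=\calV_e\prod_i\lambda_{e_i}$; combined with the margin constraints and uniqueness of $f^d(\calV,m,H)$ this gives $f=f^d(\calV,m,H)$. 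The base case $d=2$ I would reformulate as an invariance statement. Since $f^d(\calV,m,H)$ lies in $R(m,H)$, the instance whose votes equal $f^d(\calV,m,H)$ is \emph{exact} (take $\delta=1$), so exactness yields $F(f^d(\calV,m,H),m,H)=\{f^d(\calV,m,H)\}$; moreover $\calV$ is obtained from $f^d(\calV,m,H)$ by scaling each row and each column by a positive constant (the reciprocals of the multipliers). Hence it suffices to show that $F$ is invariant under multiplying all votes in a single slice by a positive scalar, after which exactness transports the value from the votes $f^d(\calV,m,H)$ to the votes $\calV$. To obtain this slice-scaling invariance I would first handle rational factors by duplicating the slice into equally weighted identical copies, using homogeneity to force the copies to receive equal allocations and consistency to relate the enlarged instance to the original, and then pass to real factors by a denseness argument.

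The step I expect to be the main obstacle is exactly this slice-scaling invariance in the planar case: it must bridge from instances whose margins are proportional to the vote marginals---the only situation in which exactness speaks directly---to arbitrary margins, controlling the single biproportional degree of freedom using nothing beyond the anonymity-type homogeneity and consistency. The duplication argument is delicate because product-form restriction cannot directly \emph{merge} the duplicated copies, so the bookkeeping of the induced margins $m'(f)$ under duplication and restriction must be arranged so that homogeneity genuinely removes the residual degree of freedom rather than leaving it free. Two secondary technical points are ensuring strict positivity on $E(\calV)$ so that the multipliers are well defined and the per-plane assembling is valid, and dealing with tuples $e$ where $\calV_e=0$, which are forced to zero and should be excluded via $E(\calV)$ throughout; the connectivity of $E(\calV)$ needed for assembling is precisely what underlies existence and uniqueness of the fair share in the first place.
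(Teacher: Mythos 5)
Your first direction (that $F^*$ satisfies the three axioms) is correct and essentially identical to the paper's verification; the transposition argument for homogeneity is a harmless variant of the paper's direct comparison of marginals. Your uniqueness direction also has the right skeleton: show that the output of any method satisfying the axioms is invariant under scaling the votes of each slice by a positive multiplier, then use the fair-share multipliers to transport the instance to one whose vote tensor already lies in $R(m,H)$, where exactness pins the answer down. This is exactly the paper's strategy (its Lemma~\ref{lem:fs-scaled} plus the last paragraph of the proof of Theorem~\ref{thm:fair-share}).

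The gap is that you never actually prove the slice-scaling invariance, and the route you sketch for it would fail. Passing from rational scaling factors to real ones ``by a denseness argument'' requires some continuity of $F$ in the vote tensor, which is not among the three axioms and cannot be derived from them; and, as you yourself observe, consistency only \emph{restricts} to subproducts $E'=\prod_i N'_i$ and cannot \emph{merge} duplicated copies back into a single slice with rescaled votes, so even the rational case is not closed by your bookkeeping. The paper avoids both problems with a single construction that works for arbitrary positive real multipliers in arbitrary dimension $d$: given $\mu_v>0$ for $v\in N_i$, adjoin a disjoint copy $N_i'$ of $N_i$ carrying the scaled votes $\mu_{e_i}\calV_e$, keep the marginal $m_v$ for each element and its copy, and double the house size to $2H$. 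Homogeneity applied to each pair $(v,v')$ forces the allocation on the copy to equal that on the original, and consistency applied to the two subproducts $E$ and $E'$ simultaneously yields $|F(\calV,m,H)|=1$ and $F(\calV,m,H)=F(\calV',m,H)$ in one stroke. This also makes your detour through the planar case $d=2$ and the assembly of per-plane multipliers unnecessary: the scaling lemma is applied $d$ times directly in dimension $d$, so no base case, no curl-free gluing, and no connectivity argument on $E(\calV)$ is needed. As written, your proof of uniqueness therefore rests entirely on an unestablished lemma, and the specific tools you propose for it (rational duplication plus density) are not sufficient under the stated axioms.
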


The following lemma is key for proving uniqueness. It states that every allocation method satisfying consistency and homogeneity outputs a single allocation for every instance, and its output is invariant under scaling of the vote tensor in a single dimension.

\begin{lemma}
\label{lem:fs-scaled}
Let $F$ be an allocation method that satisfies consistency and homogeneity, and let $(\calV,m,H)$ be an instance such that $\calV$ has entries in $\prod_{j=1}^{d}N_j$. Let $i\in \{1,\ldots,d\}$ and $\mu_v>0$ for each $v\in N_i$ be arbitrary, and let $\calV'\in\prod_{j=1}^{d}N_j$ be defined as \[\calV'_{e}=\mu_{e_i}\calV_e \qquad \text{for every }\;e\in \prod_{j=1}^{d}N_j.\] 
Then,  $|F(\calV,m,H)|=1$ and $F(\calV,m,H)= F(\calV',m,H)$.
\end{lemma}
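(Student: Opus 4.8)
The plan is to prove both assertions simultaneously, first reducing to invariance under scaling a single slice and then exploiting the interplay of consistency and homogeneity through a replication (aggregation/disaggregation) argument. Write $\calV'$ for the tensor obtained from $\calV$ by multiplying, in dimension $i$, each slice $v$ by $\mu_v$. Since distinct slices of $N_i$ are disjoint and single-slice invariance, once proved for every instance, composes, it suffices to treat the case where one slice $v_0\in N_i$ is scaled by a factor $\mu>0$ and all others are left unchanged; single-valuedness will drop out of the same construction. The only admissible moves are therefore: restriction to a product sub-box $E'=\prod_j N'_j$ (the first part of consistency), re-solving such a sub-box and gluing it back (the second part), and appeals to the equal-votes/equal-marginals symmetry of homogeneity.

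The engine is a disaggregation correspondence. Replace the slice $v_0$ by $k$ copies $v_0^1,\dots,v_0^k$, each carrying exactly the votes of $v_0$ (so the copies have pairwise identical votes) and each assigned marginal $m_{v_0}/k$, leaving all other slices and marginals untouched; this preserves both the marginal-sum condition and the house size $H$. Homogeneity then forces every allocation of the refined instance to coincide on the $k$ copies, and restricting to the sub-box that keeps a single copy (first part of consistency) together with gluing (second part) links the refined instance to the coarser one in which the copies are merged into a single slice of votes $k\,\calV_{v_0,\cdot}$ and marginal $m_{v_0}$, the common per-copy allocation being multiplied by $k$ under the merge. Carefully tracking the induced marginals $m'(f)$ of the restricted sub-instances and reconciling them with those of the merged instance is the bookkeeping core of this step. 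This yields invariance under scaling a slice by a positive integer; refining into $p$ versus $q$ equal copies upgrades it to any rational factor $p/q$, where it is convenient to permit real-valued votes so that $\calV_{v_0,\cdot}$ may be divided by a common denominator. The same refinement delivers single-valuedness: subdividing every slice in every dimension finely enough, homogeneity pins the allocation on the fully symmetric refined instance uniquely, and transporting this along the correspondence gives $|F(\calV,m,H)|=1$.

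The main obstacle is the passage from rational to arbitrary real multipliers, since replication intrinsically produces only rational ratios and no continuity axiom is available. The plan is to bridge this with single-valuedness together with a squeeze/monotonicity argument: for two rational factors sandwiching a target real $\mu$, the rational case forces the corresponding unique allocations to agree, so the allocation at $\mu$ is trapped between equal values and must equal them. Establishing the requisite monotone dependence of the allocation on the scaling factor directly from consistency and homogeneity is the delicate point; I expect to obtain it by comparing the instance at $\mu$ with its integer refinements and invoking the correspondence above, in the spirit of the two-dimensional argument of \citet{balinski1989axiomatic}. Once real single-slice scaling invariance is in hand, composing over the slices of $N_i$ gives $F(\calV,m,H)=F(\calV',m,H)$, completing the proof.
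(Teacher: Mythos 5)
There is a genuine gap, and it sits at the very center of your argument: the ``disaggregation correspondence.'' The three axioms give you exactly three moves --- restriction of an allocation to a product sub-box $E'=\prod_j N'_j$, re-solving such a sub-box and gluing it back, and the equal-allocation conclusion of homogeneity --- and none of them lets you \emph{merge} $k$ copies of a slice (each with votes $\calV_{v_0,\cdot}$ and marginal $m_{v_0}/k$) into a single slice with votes $k\,\calV_{v_0,\cdot}$ and marginal $m_{v_0}$. Consistency only ever deletes or reinstates whole slices of a fixed index set; it never aggregates two elements of $N_i$ into one, so the step that is supposed to convert replication into scaling is not licensed by anything you have. (It also forces non-integer marginals $m_{v_0}/k$ and real-valued votes, which leave the framework in which the method is defined.) The second gap is the one you flag yourself: the passage from rational to real multipliers. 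There is no continuity or monotonicity axiom, and the ``squeeze'' you describe presupposes a monotone dependence of the allocation on the scaling factor that you do not derive; without it the real case is simply open. Your uniqueness argument (``subdividing every slice finely enough, homogeneity pins the allocation uniquely'') is likewise too vague to check: subdivision does not make the refined instance symmetric enough for homogeneity alone to determine it.

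The paper's proof avoids all of this with one short construction that you may want to internalize. Adjoin a disjoint copy $N'_i$ of $N_i$ and form the doubled instance $(\calV'',m'',2H)$ in which the slice $v\in N_i$ carries the original votes and its copy $v'$ carries the scaled votes $\mu_v\calV_e$, with $m''_{v'}=m_v$. Homogeneity is applied directly to the pair $(v,v')$ --- note that the proof uses it for \emph{proportional} slices with equal marginals, the standard Balinski--Demange form, rather than the literal equal-votes special case you restricted yourself to --- to force $f_e=f_{e'}$ on corresponding entries. Consistency (part (a)) then shows that the restriction of $f$ to the original half lies in $F(\calV,m,H)$ and the restriction to the copied half lies in $F(\calV',m,H)$, and the gluing clause (part (b)) combined with homogeneity rules out a second element of either set, giving $|F(\calV,m,H)|=1$ and $F(\calV,m,H)=F(\calV',m,H)$ in one stroke, with arbitrary real $\mu_v>0$ and no replication, rational approximation, or monotonicity needed.
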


\begin{proof}
Let $\calV,~ i,~ \mu_v$ for each $v\in N_i$, and $\calV'$ be defined as in the statement. Let $N'_{i}$ be a copy of $N_i$, so that we denote by $v'\in N'_i$ the copy of $v\in N_i$. Consider the related problem $(\calV'',m'',2H)$, where $\calV''\in \prod_{j=1}^{i-1}N_j\times (N_i\cup N'_i) \times \prod_{j=i+1}^{d}N_j$ is defined as
\[
\calV''_e=
\begin{cases}
\calV_e &\text{if }e_i\in N_i,\\ 
\mu_{e_i}\calV_e &\text{if }e_i\in N'_i, 
\end{cases}
\]
for every $e$, $m''_v=m_v$ for every $j\in\{1,\ldots,d\}$ and $v\in N_j$, and $m''_{v'}=m_v$ for every $v\in N_i$.
Let $E=\prod_{j=1}^{d}N_j$ and $E'=\prod_{j=1}^{i-1}N_j\times N'_i \times \prod_{j=i+1}^{d}N_j$ denote the tuples with the $i$th entry in $N_i$ and $N'_i$, respectively.
Let finally $f\in F(\calV'',m'',2H)$.

By homogeneity, we have that $f_e=f_{e'}$ for every $e,e'$ such that $e_i=v\in N_i,\ e'_i=v'\in N'_i$ and $e_j=e'_j$ for every $j\not=i$.
By consistency, $f_E\in  F(\calV,m,H)$ and $f_{E'}\in  F(\calV',m,H)$ and, moreover, if there was $g_E\in F(\calV,m,H)$ with $g_E\not=f_E$, then the tensor defined to be equal to $g_E$ on $E$ and to $f_{E'}$ on $E'$ would also be in $F(\calV'',m'',2H)$, contradicting homogeneity (and similarly if there was $g_{E'}\in F(\calV',m,H)$ with $g_{E'}\not=f_{E'}$). We conclude that $|F(\calV,m,H)|=1$ and $F(\calV,m,H)= F(\calV',m,H)$.
\end{proof}

We are now ready to prove Theorem \ref{thm:fair-share}.

\begin{proof}[Proof of Theorem \ref{thm:fair-share}]
We first prove that the fair share method $F^*$ satisfies axioms \ref{axiom:exact}-\ref{axiom:homogen}, and then we prove that it is the unique one satisfying them. To prove the former, consider an arbitrary instance $(\calV,m,H)$ and let $f^*= f^d(\calV,m,H)$ denote the fair share tensor for this instance, for ease of notation. Let $\lambda_v$ for each $i\in \{1,\ldots,d\}$ and $v\in N_i$ be multipliers that define $f^*$.\\

\noindent{\bf Exactness.} Assume that $\delta\in \RR$ is such that $\delta \calV \in R(m,H)$. By taking $\lambda_v=\delta^{1/d}$ for every $i\in\{1,\ldots,d\}$ and $v\in N_{i}$, we have $\calV_{e}\prod_{i=1}^{d}\lambda_{e_i} = \delta \calV_{e}$ for each $e\in \prod_{i=1}^{d}{N_{i}}$. Therefore, we conclude from the uniqueness of the fair share tensor that $f^*= \delta \calV$ and thus $F^*(\calV,m,H)=\{\delta\calV\}$.\\

\noindent{\bf Consistency.} Let $N'_1\subseteq N_1,\ldots, N'_d\subseteq N_d$ be arbitrary sets and $E' = \prod_{i=1}^{d}{N'_i}$, and observe that for each $i\in \{1,\ldots,d\}$ and $v\in N'_i$ we have that
\[
    \sum_{\substack{e\in \prod_{j=1}^{d}N'_j:\\e_i=v}}{f^*_{e}} = \sum_{\substack{e\in \prod_{j=1}^{d}N_j:\\e_i=v}}{f^*_{e}}-\sum_{\substack{e\not\in \prod_{j=1}^{d}N'_j:\\e_i=v}}{f^*_{e}} = (m'(f^*))_v.
\]
Since $f^*_{e} = \calV_{e}\prod_{i=1}^{d}{\lambda_{e_i}}$ for every $e\in E'$, we conclude that 
\[
    f^*_{E'}\in  F^*\left(\calV_{E'},m',\sum_{e\in E'} f^*_{e}\right),
\]
which proves part \ref{item:cons-1}.
For part \ref{item:cons-2}, since fair share tensors are unique,  $g\in F^*(\calV_{E'},m',\sum_{e\in E'} f^*_{e})$ implies $g=f^*_{E'}$, so it is straightforward that the matrix defined to be equal to $g$ on $E'$ and $f^*$ elsewhere belongs to $F^*(\calV,m,H)$.\\

\noindent{\bf Homogeneity.} Let $i\in \{1,\ldots,d\}$ and $v_1,v_2\in N_i$ be such that $\calV_{e}=\calV_{e'}$ for every pair $e,e'\in E$ with $e_i=v_1,\ e'_i=v_2$, and $e_j=e'_j$ for every $j\not=i$. Furthermore, suppose that $m_{v_1}=m_{v_2}$. Suppose we had $f^*_{e} \not = f^*_{e'}$ for some $e,e'\in E$ with $e_i=v_1,\ e'_i=v_2$ and $e_j=e'_j$ for every $j\not=i$. This would imply $\lambda_{v_1} \not = \lambda_{v_2}$; suppose w.l.o.g.\ $\lambda_{v_1} > \lambda_{v_2}$. This yields
\begin{align*}
\sum_{\substack{e\in \prod_{j=1}^{d}:\\ e_i=v_1}}f^*_{e} & = \sum_{\substack{e\in \prod_{j=1}^{d}:\\ e_i=v_1}} \left(\calV_{e}\prod_{j=1}^{d}\lambda_{e_j}\right) = \sum_{\substack{e\in \prod_{j=1}^{d}:\\ e_i=v_1}} \left(\calV_{e}\lambda_{v_1}\prod_{j\in \{1,\ldots,d\}\setminus \{i\}}\lambda_{e_j}\right)\\
& > \sum_{\substack{e\in \prod_{j=1}^{d}:\\ e_i=v_2}} \left(\calV_{e}\lambda_{v_2}\prod_{j\in \{1,\ldots,d\}\setminus \{i\}}\lambda_{e_j}\right) = \sum_{\substack{e\in \prod_{i=1}^{d}:\\ e_i=v_2}} \left(\calV_{e}\prod_{j=1}^{d}\lambda_{e_j}\right) = \sum_{\substack{e\in \prod_{i=1}^{d}:\\ e_i=v_2}} f^*_{e},
\end{align*}
where the inequality holds both from $\lambda_{v_1} > \lambda_{v_2}$ and from the fact that $\calV_{e}=\calV_{e'}$ for every pair $e,e'\in E$ with $e_i=v_1,\ e'_i=v_2$, and $e_j=e'_j$ for every $j\not=i$.
This contradicts \eqref{eq:triprop1-frac} in the definition of a fair share tensor since $m_{v_1}=m_{v_2}$, so we conclude that $f^*_{e} = f^*_{e'}$ for every $e,e'\in E$ with $e_i=v_1,\ e'_i=v_2$, and $e_j=e'_j$ for every $j\not=i$.\\

Now, let $F$ be an allocation method that satisfies axioms \ref{axiom:exact}-\ref{axiom:homogen} and let $(\calV,m,H)$ be an arbitrary instance. For every $i\in \{1,\ldots,d\}$ and every $v\in N_i$, let $\lambda_v$ be fair share multipliers, i.e., values such that $f^*$ defined as $f^*_{e} = \calV_{e}\prod_{i=1}^{d}{\lambda_{e_i}}$ for every $e\in \prod_{i=1}^{d}{N_i}$ satisfies \eqref{eq:triprop1-frac}-\eqref{eq:triprop2-frac}.
Define $\calV'$ as $\calV'_e=\calV_{e}\prod_{i=1}^{d}{\lambda_{e_i}}$ for every $e\in \prod_{i=1}^{d}{N_i}$. Applying the result from Lemma \ref{lem:fs-scaled} $d$ times with $\mu_v=\lambda_v$ for every $i\in \{1,\ldots,d\}$ and $v\in N_i$, we obtain that $F(\calV',m,H) = F(\calV,m,H)$. But $\calV'\in R(m,H)$, thus from exactness we have that $F(\calV',m,H)=\{ \calV' \}$. We conclude that $F(\calV,m,H)= \{ \calV' \} = \{ f^*\}$. 
\end{proof}

We remark that the fair share tensor can be found by solving a convex optimization program, whose constraints correspond to $f$ being an element of the polytope $R(m,H)$ and whose convex objective function is
\[ \sum_{e\in \prod_{i=1}^{d}N_i}\left(\log\left(\frac{f_{e}}{\calV_{e}}\right)-1\right).\]
By standard KKT optimality conditions, it can be verified that the optimal solution of such a program is the $d$-dimensional fair share \citep{idel2016review}.

\subsection{Approximate $3$-proportional Apportionments with Plurality}
\label{subsec:plurality}

In this section, we extend the existential and algorithmic results from \citet{cembrano2022multidimensional} for the case of apportionments with lower and upper bounds on each entry. This result will guarantee the correctness of our methods with plurality constraints and may be of independent interest for other methods incorporating bounds on the number of seats that certain entries should be allocated.

In order to formalize this extension, we consider the following integer linear program inspired by the network flow approach introduced by \citet{rote2007matrix} for matrix scaling, and used by \citet{gaffke2008divisor, gaffke2008vector} to model biproportionality.
\begin{align}
\text{minimize}  \quad \;\sum_{e\in E(\calV)} & \sum_{t=1}^{H} y_{e}^t\log\left(\frac{t}{\mathcal{V}_{e}}\right) \label{eq:objective}\\
\text{subject to}  \quad\quad \sum_{t=1}^{H}y_{e}^t & = x_{e} \ \quad \quad \text{ for every }e\in E(\calV),\label{eq:aggregation}\\
\quad \quad \sum_{\substack{e\in E(\calV):\\ e_i=v}}x_{e} & = m_v \quad \quad \text{ for every }i\in \{1,\ldots,d\} \text{ and }v\in N_i,\label{eq:marginals-lp}\\
\quad\quad x_{e} & \geq I_{e} \hspace{0.95cm} \text{ for every }e\in E(\calV),\label{eq:lb-lp}\\
\quad\quad x_{e} & \leq U_{e} \hspace{0.86cm} \text{ for every }e\in E(\calV),\label{eq:ub-lp}\\
\quad \quad  y_{e}^{t} & \in \{0,1\}  \hspace{0.38cm} \text{ for every }e\in E(\calV)\text{ and }t\in \{1,\ldots,H\}.\label{eq:binary}
\end{align}
For each $e\in E(\calV)$ and $t\in\{1,\ldots,H\}$ we have a binary variable $y_{e}^{t}$ whose cost in the objective function is given by $\log(t/\mathcal{V}_{e})$. The variable $x_{e}$ represents the total number of seats to be allocated in the apportionment for candidates of tuple $e$; constraint (\ref{eq:aggregation}) takes care of aggregating the seats in these variables.
Constraint \eqref{eq:marginals-lp} enforces every feasible solution to satisfy the marginals and constraints \eqref{eq:lb-lp} and \eqref{eq:ub-lp} ensure respecting the bounds. The following theorem states the main result for this setting with plurality.

\begin{theorem}\label{thm:plurality}
Let $\calV,\ m$, $H$, $I$, and $U$ be
such that the linear relaxation of (\ref{eq:objective})-(\ref{eq:binary}) is feasible.
Let $\alpha_1,\ldots \alpha_d$ be nonnegative integers such that 
$\sum_{i=1}^{d}1/(\alpha_i+2)\le 1$.
Then, there exists an \mbox{$\alpha$-approximate} $d$-dimensional proportional apportionment.
\end{theorem}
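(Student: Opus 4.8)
The plan is to obtain the desired apportionment by rounding an optimal fractional solution of the linear relaxation of (\ref{eq:objective})--(\ref{eq:binary}), reading off the proportionality multipliers $\lambda_v$ from linear-programming duality and controlling the marginal deviations through a discrepancy argument that extends the one of \citet{cembrano2022multidimensional}.

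First I would record a structural simplification of the relaxation. Since for each fixed $e$ the cost $\log(t/\calV_e)$ is strictly increasing in $t$, any optimal $y$ may be taken ``bottom-loaded'', i.e.\ $y_e^t=1$ for $t\le \lfloor x_e\rfloor$ with at most one fractional value at $t=\lceil x_e\rceil$; hence the relaxation collapses to the minimization of the separable convex objective $\sum_{e\in E(\calV)}\Phi_e(x_e)$, with $\Phi_e$ the piecewise-linear convex interpolation of $k\mapsto \sum_{t=1}^{k}\log(t/\calV_e)$, over the polytope defined by (\ref{eq:marginals-lp}), (\ref{eq:lb-lp}), (\ref{eq:ub-lp}). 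Since this feasible region is bounded, feasibility of the relaxation yields an optimal fractional tensor $\bar x$, and I let $\mu_v$ be the dual multiplier attached to the marginal constraint of $v\in N_i$ in (\ref{eq:marginals-lp}), setting $\lambda_v=\exp(\mu_v)$.

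Next I would verify that these multipliers already encode the proportionality requirement (\ref{eq:triprop2}) for any tensor obtained by rounding each $\bar x_e$ to an adjacent integer. The subdifferential of $\Phi_e$ at an integer $k$ is the interval $[\log(k/\calV_e),\log((k+1)/\calV_e)]$ and is the singleton $\{\log((\lfloor \bar x_e\rfloor+1)/\calV_e)\}$ at a non-integer point; writing the KKT conditions of the convex program and distinguishing whether $\bar x_e$ lies in the interior of $[I_e,U_e]$ or is pinned to a bound, complementary slackness shows that $\{\lfloor \bar x_e\rfloor,\lceil\bar x_e\rceil\}$ is contained in $\text{mid}(\llbracket \calV_e\prod_{i}\lambda_{e_i}\rrbracket\cup\{I_e,U_e\})$. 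Moreover $I_e\le \bar x_e\le U_e$ with $I_e,U_e$ integral forces $I_e\le\lfloor\bar x_e\rfloor$ and $\lceil\bar x_e\rceil\le U_e$, so (\ref{eq:lb-lp}) and (\ref{eq:ub-lp}) are automatically preserved; in particular every tuple pinned to a bound is already integral and needs no rounding.

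It then remains to choose, among the roundings $x_e\in\{\lfloor\bar x_e\rfloor,\lceil\bar x_e\rceil\}$, one whose margins satisfy (\ref{eq:triprop1}), i.e.\ deviate from $m_v$ by at most $\alpha_i$ in dimension $i$. This is the heart of the argument and the step I expect to be the main obstacle. I would phrase it as a rounding statement for $\bar x$ restricted to its non-integral (hence strictly interior) coordinates: the bound-tight coordinates are frozen at their integer values and contribute a fixed integer amount to each margin, so the active system is exactly the bound-free incidence system of the $d$ marginal families handled by \citet{cembrano2022multidimensional}. I would extend their discrepancy theorem---whose guarantee is governed precisely by $\sum_{i=1}^{d}1/(\alpha_i+2)\le 1$---to this restricted system, obtaining an integer tensor $x$ with $x_e\in\{\lfloor\bar x_e\rfloor,\lceil\bar x_e\rceil\}$ and marginal deviations at most $\alpha_i$. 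Combined with the previous paragraph, $x$ then satisfies (\ref{eq:triprop1})--(\ref{eq:triprop2}) with the multipliers $\lambda_v$, i.e.\ it is an $\alpha$-approximate $d$-dimensional proportional apportionment. The delicate points to check are that freezing the bound-tight variables does not degrade the discrepancy constant and that the boundary cases where $\calV_e\prod_{i}\lambda_{e_i}$ is exactly an integer or where $\bar x_e$ sits on a bound remain consistent with the $\text{mid}(\cdot)$ clamping in (\ref{eq:triprop2}).
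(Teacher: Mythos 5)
Your proposal is correct and follows essentially the same route as the paper: round an optimal solution of the LP relaxation, extract the multipliers $\lambda_v=\exp(\cdot)$ from the duals of the marginal constraints, use complementary slackness (your subdifferential analysis of the separable piecewise-linear objective is equivalent to the paper's case analysis on $x_e=I_e$, $x_e=U_e$, and $I_e<x_e<U_e$ via the dual variables $\omega^-,\omega^+,\beta$) to show any adjacent-integer rounding satisfies \eqref{eq:triprop2}, and then invoke the discrepancy theorem of \citet{cembrano2022multidimensional} to control the marginal deviations. The one point you flag as delicate---that freezing the integral, bound-tight coordinates does not affect the discrepancy guarantee---is handled the same way in the paper, which simply observes that the discrepancy rounding acts only on fractional entries.
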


This theorem states the same existence result of \citet[Theorem 4]{cembrano2022multidimensional}. The discrepancy approach \citep[Algorithm 1 and Theorem 5]{cembrano2022multidimensional} remains valid for this setting as well, so the only result we need to extend is their Lemma 5, stating that any rounding of an optimal solution of the linear relaxation satisfies the proportionality condition. In our context, this entails showing that any rounding of an optimal solution of the linear relaxation of \eqref{eq:objective}-\eqref{eq:binary} satisfies the modified proportionality condition \eqref{eq:triprop2}.
In order to do so, we first observe that by duality, for any feasible solution $(x,y)$ of the linear relaxation of (\ref{eq:objective})-(\ref{eq:binary}), we have that $(x,y)$ is optimal if and only if there exists a dual solution $(\Lambda,\omega^-, \omega^+,\beta)$ such that the following conditions hold:
\begingroup
\allowdisplaybreaks
\begin{align}
	\sum_{i=1}^{d}\Lambda_{e_i} + \omega^-_e + \omega^+_e + \beta_{e}^t - \log \left(\frac{t}{\mathcal{V}_{e}}\right) & \le 0 &\text{ for every }e\in E(\mathcal{V}) \text{ and } t\in \{1,\ldots,H\}, \label{eq:dual1}\\
    y_{e}^t\left[\sum_{i=1}^{d}\Lambda_{e_i} + \omega^-_{e} + \omega^+_e + \beta_{e}^t- \log \left(\frac{t}{\mathcal{V}_{e}}\right) \right] & = 0 &\text{ for every }e\in E(\mathcal{V}) \text{ and } t\in \{1,\ldots,H\}, \label{eq:dual2}\\
	\omega^-_{e}\left(\sum_{t=1}^{H}y_{e}^t-I_e\right)&=0& \text{ for every }e\in E(\mathcal{V}), \label{eq:dual3}\\
	\omega^-_{e}&\geq 0 & \text{ for every }e\in E(\mathcal{V}), \label{eq:dual4}\\
    \omega^+_{e}\left(U_e - \sum_{t=1}^{H}y_{e}^t\right)&=0& \text{ for every }e\in E(\mathcal{V}), \label{eq:dual5}\\
	\omega^+_{e}&\leq 0 & \text{ for every }e\in E(\mathcal{V}), \label{eq:dual6}\\
	\beta_{e}^t(y_{e}^t-1)&=0 & \text{ for every }e\in E(\mathcal{V}) \text{ and } t\in \{1,\ldots,H\},& \label{eq:dual7}\\
	\beta_{e}^t&\le 0 & \text{ for every }e\in E(\mathcal{V}) \text{ and } t\in \{1,\ldots,H\}, & \label{eq:dual8}
\end{align}
\endgroup
where $\Lambda_v$ is the dual variable associated to the constraint (\ref{eq:marginals-lp}) for every $i\in \{1,\ldots,d\}$ and $v\in N_i$, $\omega^-_{e}$ is the dual variable associated to the constraint \eqref{eq:lb-lp} for every $e$, $\omega^+_{e}$ is the dual variable associated to the constraint \eqref{eq:ub-lp} for every $e$, and $\beta_{e}^t$ is the dual variable associated to the upper bound of one on the value of $y_{e}^t$ for every $e\in E(\calV)$ and $t\in \{1,\ldots,H\}$.

We refer to a tuple $(x,y,\Lambda,\omega^-,\omega^+,\beta)$ jointly satisfying \eqref{eq:dual1}-\eqref{eq:dual8}, with $(x,y)$ feasible for the linear relaxation of (\ref{eq:objective})-(\ref{eq:binary}), as an optimal primal-dual pair for this linear relaxation. The following lemma establishes that any rounding of an optimal solution $x$ of this linear relaxation satisfies the proportionality condition \eqref{eq:triprop2}. As mentioned, this lemma along with the results of \citet[Theorem 1, Theorem 5]{cembrano2022multidimensional} allows us to conclude Theorem \ref{thm:plurality}. 
We remark that when $d=2$, the feasibility of the linear relaxation is guaranteed when the bounds $I$ and $U$ are consistent with the marginals $m$, i.e., when condition \eqref{eq:marginals-bounds} holds. This can be seen by modeling the apportionment as a network flow \citep{gaffke2008divisor} and applying Hoffman's Circulation Theorem \citep{hoffman1960recent}.
For $d\geq 3$, it is guaranteed when $I_e=0$, $U_e=H$, and all entries of $\calV$ are strictly positive \citep{cembrano2022multidimensional}.

\begin{lemma}
\label{lem:relaxation}
Let $(x,y,\Lambda,\omega^-,\omega^+,\beta)$ be an optimal primal-dual pair for the linear relaxation of \mbox{(\ref{eq:objective})-(\ref{eq:binary})} with input $\calV,\ m,\ H,\ I$, and $U$.
If $\bar{x}$ is an integral vector with entries in $E(\calV)$ such that $\bar x_{e}\in \{\lfloor x_{e}\rfloor,\lceil x_{e}\rceil\}$ for every $e\in E(\calV)$, then defining $\lambda_v=\exp(\Lambda_v)$ for every $i\in \{1,\ldots,d\}$ and $v\in N_i$ we have that
\[
    \bar x_{e} \in \text{mid} \left( \left\llbracket \calV_e \prod_{i=1}^{d}\lambda_{e_i} \right\rrbracket \cup \{ I_e, U_e \} \right) \qquad \text{for every } e\in E(\calV).
\]
\end{lemma}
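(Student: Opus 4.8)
The plan is to read off the structure of the optimal primal solution $(x,y)$ from the complementary slackness conditions \eqref{eq:dual1}--\eqref{eq:dual8} and thereby pin down each $\bar x_e$ relative to the fractional target $\phi_e := \calV_e\prod_{i=1}^d\lambda_{e_i}=\calV_e\exp\!\big(\sum_{i=1}^d\Lambda_{e_i}\big)$ and the bounds $I_e,U_e$. The useful auxiliary quantity is the \emph{shifted} target $\psi_e := \phi_e\exp(\omega^-_e+\omega^+_e)$, which folds the two bound multipliers into the per-seat threshold. The argument then splits into two essentially independent pieces: (i) showing $\bar x_e\in\llbracket\psi_e\rrbracket$, which is exactly the unbounded statement of \citet[Lemma 5]{cembrano2022multidimensional} with the quantity $\sum_i\Lambda_{e_i}$ replaced by $\sum_i\Lambda_{e_i}+\omega^-_e+\omega^+_e$; and (ii) translating $\psi_e$ back to $\phi_e$ via the bound complementary slackness so as to land inside the median set.

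For (i), I would fix $e$ and analyze the seat variables $y_e^1,\dots,y_e^H$. When $y_e^t=0$, \eqref{eq:dual7} forces $\beta_e^t=0$, so \eqref{eq:dual1} gives $\sum_i\Lambda_{e_i}+\omega^-_e+\omega^+_e\le\log(t/\calV_e)$, i.e.\ $t\ge\psi_e$; when $0<y_e^t<1$, \eqref{eq:dual7} again gives $\beta_e^t=0$ while \eqref{eq:dual2} gives equality, i.e.\ $t=\psi_e$; and when $y_e^t=1$, \eqref{eq:dual2} together with $\beta_e^t\le 0$ from \eqref{eq:dual8} gives $t\le\psi_e$. Hence the optimal column is a threshold vector: $y_e^t=1$ for $t<\psi_e$, $y_e^t=0$ for $t>\psi_e$, and a fractional value is possible only when $\psi_e$ is a positive integer, at $t=\psi_e$. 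Summing via \eqref{eq:aggregation} yields $x_e=\lfloor\psi_e\rfloor$ when $\psi_e\notin\NN$ and $x_e\in[\psi_e-1,\psi_e]$ when $\psi_e\in\NN$; in either case every $\bar x_e\in\{\lfloor x_e\rfloor,\lceil x_e\rceil\}$ satisfies $\bar x_e\in\llbracket\psi_e\rrbracket$ (note $\psi_e>0$ since $\calV_e>0$ on $E(\calV)$, ruling out the $\llbracket 0\rrbracket$ case). I would also record the immediate fact $I_e\le\bar x_e\le U_e$, since $I_e\le x_e\le U_e$ by \eqref{eq:lb-lp}--\eqref{eq:ub-lp} with $I_e,U_e$ integers.

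For (ii), observe $\psi_e/\phi_e=\exp(\omega^-_e+\omega^+_e)$ with $\omega^-_e\ge 0$ and $\omega^+_e\le 0$ by \eqref{eq:dual4},\eqref{eq:dual6}. If $\psi_e>\phi_e$ then $\omega^-_e>0$, so \eqref{eq:dual3} forces $x_e=I_e$ and hence $\bar x_e=I_e$; using monotonicity of the rounding operator (if $s\le t$ then $\min\llbracket s\rrbracket\le\min\llbracket t\rrbracket$) together with $I_e\in\llbracket\psi_e\rrbracket$ gives $\min\llbracket\phi_e\rrbracket\le\min\llbracket\psi_e\rrbracket\le I_e\le U_e$, so $I_e$ is the median of $\{\min\llbracket\phi_e\rrbracket,I_e,U_e\}$ and thus lies in $\text{mid}(\llbracket\phi_e\rrbracket\cup\{I_e,U_e\})$. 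Symmetrically, $\psi_e<\phi_e$ forces $\omega^+_e<0$, hence $x_e=U_e=\bar x_e$ via \eqref{eq:dual5}, and $U_e\le\max\llbracket\psi_e\rrbracket\le\max\llbracket\phi_e\rrbracket$ identifies $U_e$ as the relevant median. Finally, if $\psi_e=\phi_e$ then $\bar x_e\in\llbracket\phi_e\rrbracket\cap[I_e,U_e]$ is its own median. This verifies \eqref{eq:triprop2} in every case.

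I expect the main obstacle to be step (i): making the threshold-structure argument fully rigorous at the integer boundary $\psi_e\in\NN$ and at the truncation $t=H$ (where $\psi_e$ may exceed the available seat index), and confirming that the pattern of $\beta_e^t$ signs yields at most one fractional seat. The bound interaction in (ii) is conceptually the genuinely new ingredient relative to \citet{cembrano2022multidimensional}, but once the shifted threshold $\psi_e$ is available it reduces to the sign bookkeeping of $\omega^-_e,\omega^+_e$ and the elementary monotonicity of $\llbracket\cdot\rrbracket$; there the most error-prone part is matching each case to the precise set-valued meaning of $\text{mid}(\cdot)$ and of $\llbracket\cdot\rrbracket$.
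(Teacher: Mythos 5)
Your proposal is correct and follows essentially the same route as the paper: a complementary-slackness analysis of the system \eqref{eq:dual1}--\eqref{eq:dual8} that reduces to the unbounded argument of \citet{cembrano2022multidimensional} when $\omega^-_e=\omega^+_e=0$ and otherwise shows that $\bar x_e$ sticks to the active bound, which is then verified to lie in the median set. The differences are organizational rather than substantive: the paper establishes $t_e:=\max\{t:y_e^t>0\}=\lceil x_e\rceil$ by an explicit exchange argument exploiting the strict monotonicity of $t\mapsto\log(t/\calV_e)$, where you instead read the full $0/1$ threshold structure of the column $y_e^{\cdot}$ directly off complementary slackness, and the paper cases on whether $x_e$ equals $I_e$, $U_e$, or neither, where you case on the sign of $\omega^-_e+\omega^+_e$ via the shifted target $\psi_e$; both proofs share the same (harmless, and in your case explicitly flagged) inattention to the degenerate truncation at $t=H$.
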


\begin{proof}
Let $(x,y,\Lambda,\omega^-,\omega^+,\beta)$, $\lambda$, and $\bar{x}$ be as in the statement. For the first part of the proof, we follow \citet{cembrano2022multidimensional}. For every $e\in E(\mathcal{V})$, we consider the value  
\[t_{e}=\max\{t\in \{1,\ldots,H\}:y_{e}^t>0\}\] 
when $x_{e}>0$ and $t_{e}=0$ when $x_{e}=0$, and we claim that for every $e\in E(\calV)$ it holds that $t_{e}=\lceil x_{e}\rceil$. This is straightforward for the tuples $e\in E(\calV)$ with $x_{e}=0$. 
Constraint \eqref{eq:aggregation} further implies that $t_{e}\geq \lceil x_{e} \rceil$. Suppose towards a contradiction that $e\in E(\calV)$ is such that $t_{e}>\lceil x_{e} \rceil$. Then, there are at least two fractional values in $y^1_{e}\ldots,y_{e}^H$; consider in addition the solution $(x,Y)$ constructed as follows: 
$$
Y_{e}^t=
\begin{cases}
1 & \text{for every } t<\lfloor x_{e}\rfloor +1, \\
x_{e}-\lfloor x_{e}\rfloor & \text{for } t=\lfloor x_{e}\rfloor+1,\\
0& \text{ otherwise.} 
\end{cases}
$$
The solution $(x,Y)$ is feasible for the linear relaxation of (\ref{eq:objective})-(\ref{eq:binary}) since 
\[\sum_{t=1}^{H}Y_{e}^t=(\lfloor x_{e}\rfloor+1)-1+x_{e}-\lfloor x_{e}\rfloor=x_{e}.\]
Additionally, since for every $e\in E(\mathcal{V})$ the function $\log(t/\mathcal{V}_{e})$ is strictly increasing as a function of $t\in \{1,\ldots,H\}$, we obtain that $\sum_{t=1}^{H}Y_{e}^t\log(t/\mathcal{V}_{e}) < \sum_{t=1}^{H}y_{e}^t\log(t/\mathcal{V}_{e})$, a contradiction to the optimality of $(x,y)$. This concludes the proof of the claim.

We now let $e\in E(\calV)$ be arbitrary and note that since $x$ satisfies constraint \eqref{eq:lb-lp} we have that $I_e\leq x_{e}\leq U_e$. In particular, if $I_e=U_e$ the condition in the statement holds trivially. We thus consider an arbitrary tuple $e\in E(\calV)$ with $I_e<U_e$ and distinguish three cases; we will conclude the condition in the statement for each of them.

If $x_{e}=I_e$, from the fact that $t_{e}=\lceil x_{e}\rceil$ we have that $y^{I_e+1}_{e}=0$ and then condition $\eqref{eq:dual7}$ implies that $\beta^{I_e+1}_{e}=0$. Moreover, since $I_e<U_e$ we have that $x_e<U_e$ and condition \eqref{eq:dual5} implies that $w^+_e=0$.
Replacing in condition $\eqref{eq:dual1}$ we obtain 
\[
    \log\left(\frac{I_e+1}{\mathcal{V}_{e}}\right) - \sum_{i=1}^{d}\Lambda_{e_i} \geq \omega^-_{e}.
\]
Since $\omega^-_{e}\geq 0$ from \eqref{eq:dual4}, this implies that $\mathcal{V}_{e} \exp(\sum_{i=1}^{d}\Lambda_{e_i}) \leq I_e+1$.
Therefore, we have either $\calV_e \prod_{i=1}^{d}\lambda_{e_i} < I_e$ or $I_e\in \left\llbracket \calV_e \prod_{i=1}^{d}\lambda_{e_i} \right\rrbracket$. In both cases, we conclude that
\[
    \bar x_{e} = x_e = I_e \in \text{mid} \left( \left\llbracket \calV_e \prod_{i=1}^{d}\lambda_{e_i} \right\rrbracket \cup \{ I_e, U_e \} \right).
\]

If $x_{e}=U_e$, from the fact that $t_{e}=\lceil x_{e}\rceil$ we have that $y^{U_e}_{e}=1$ and then condition $\eqref{eq:dual2}$ yields
\[
    \sum_{i=1}^{d}\Lambda_{e_i} + \omega^-_{e} + \omega^+_e + \beta_{e}^{U_e}- \log \left(\frac{U_e}{\mathcal{V}_{e}}\right) = 0.
\]
Moreover, since $I_e<U_e$ we have that $x_e>I_e$ and condition \eqref{eq:dual3} implies that $\omega^-_e=0$.
Replacing in the previous equation, we obtain 
\[
    \log \left(\frac{U_e}{\mathcal{V}_{e}}\right) - \sum_{i=1}^{d}\Lambda_{e_i} = \omega^+_e + \beta_{e}^{U_e} \leq 0,
\]
where we also used that $\omega^+_e\leq 0$ and $\beta^{U_e}_e \leq 0$ due to conditions \eqref{eq:dual6} and \eqref{eq:dual8}, respectively. This implies that $\mathcal{V}_{e} \exp(\sum_{i=1}^{d}\Lambda_{e_i}) \geq U_e$.
Therefore, we have either $\calV_e \prod_{i=1}^{d}\lambda_{e_i} > U_e+1$ or $U_e\in \left\llbracket \calV_e \prod_{i=1}^{d}\lambda_{e_i} \right\rrbracket$. In both cases, we conclude that 
\[
    \bar x_{e} = x_e = U_e \in \text{mid} \left( \left\llbracket \calV_e \prod_{i=1}^{d}\lambda_{e_i} \right\rrbracket \cup \{ I_e, U_e \} \right).
\]

Finally, when $I_e < x_{e} < U_e$, from conditions \eqref{eq:dual3} and \eqref{eq:dual5} we have that $\omega^-_{e}=\omega^+_e=0$. 
Fixing $h=t_{e}=\lceil x_{e}\rceil$, the complementary slackness condition (\ref{eq:dual2}) implies that 
\begin{equation}
    \sum_{i=1}^{d}\Lambda_{e_i} + \beta_{e}^{h}- \log\left(\frac{h}{\mathcal{V}_{e}}\right)=0,\label{eq:comp-slack}
\end{equation}
since $y_{e}^{h}>0$.
By (\ref{eq:dual8}) we have that $\beta_{e}^{h}\le 0$ and therefore we conclude that $\sum_{i=1}^{d}\Lambda_{e_i} \ge \log(h/\mathcal{V}_{e})$, implying $t_{e} =h\le \mathcal{V}_{e}\exp(\sum_{i=1}^{d}\Lambda_{e_i})$.
We know that for every $t> t_{e}$ it holds $y_{e}^t=0$, so the complementary slackness condition (\ref{eq:dual7}) implies that $\beta_{e}^{t}=0$.
Therefore, condition (\ref{eq:dual1}) implies that $\sum_{i=1}^{d}\Lambda_{e_i} \le \log(t/\mathcal{V}_{e})$, which is satisfied in particular for $t=t_{e}+1$.
We conclude that $\mathcal{V}_{e} \exp(\sum_{i=1}^{d}\Lambda_{e_i}) \le t_{e}+1$.
Putting it all together, we have that
\[
\lceil x_{e}\rceil \leq \mathcal{V}_{e} \prod_{i=1}^{d}\lambda_{e_i} \le \lceil x_{e}\rceil +1,
\]
implying $\lceil x_{e}\rceil \in \llbracket \mathcal{V}_{e} \prod_{i=1}^{d}\lambda_{e_i}\rrbracket$. If $x_{e}$ is integer, then $\bar x_{e} = x_{e} = \lceil x_{e}\rceil \in \{I_e+1,\ldots,U_e-1\}$, so we conclude.
If $x_{e}$ is fractional, we have $0<y_{e}^{h}<1$ for $h=t_e$, thus the complementary slackness condition \eqref{eq:dual7} implies that $\beta^{h}_{e}=0$. Therefore, we conclude from \eqref{eq:comp-slack} that $\mathcal{V}_{e}\exp(\sum_{i=1}^{d}\Lambda_{e_i})=h=t_{e}=\lceil x_{e}\rceil$.
When $\bar x_{e}=\lceil x_{e}\rceil$ we have that 
\[\bar x_{e}=\lceil x_{e}\rceil=\mathcal{V}_{e}\prod_{i=1}^{d}\lambda_{e_i}\le \lceil x_{e}\rceil+1=\bar x_{e}+1.\]
Similarly, when $\bar x_{e}=\lfloor x_{e}\rfloor$ we have that 
\[\bar x_{e}+1=\lfloor x_{e}\rfloor +1=\lceil x_{e}\rceil=\mathcal{V}_{e}\prod_{i=1}^{d}\lambda_{e_i}\ge \lfloor x_{e}\rfloor=\bar x_{e}.\]
In both cases we have $\bar x_{e} \in \llbracket \mathcal{V}_{e} \prod_{i=1}^{d}\lambda_{e_i}\rrbracket$, thus we conclude once again.
\end{proof}

\section{Description of the Apportionment Methods}\label{sec:methods}

In the application we consider, there is a set of districts $D$, a set of lists $L$, and a set of genders $G$. There is also a set of candidates $\calC$, and each candidate $c\in \calC$ competes in a district $\dis(c)\in D$, belongs to a list $\lis(c) \in L$ and is of gender $\gen(c)\in G$. The results of the election consist of the votes $v(c)\in \NN$ each candidate $c$ obtains. We are also given a vector $q\in \NN^{D}$ containing the number of seats to be assigned in each district $d\in D$ according to the law, and a strictly positive integer number $H$ corresponding to the total number of seats to be allocated. 
Since we are interested in the aggregated votes when defining algorithms and evaluating their performance, we construct a matrix $\calV\in \NN^{D\times L\times G}$ such that $\calV_{d\ell g}$ is the sum of the votes of the candidates of list $\ell$ and gender $g$ in district $d$, i.e.
\[
\calV_{d \ell g}=\sum_{\substack{c\in \calC:\: \dis(c)=d, \\ \lis(c)=\ell,\, \gen(c)=g}}{v(c)}.
\]
The objective of the methods discussed in this work is to find an assignment of candidates to seats represented by a function $\chi$, so $\chi(c)=1$ if candidate $c$ is elected and $\chi(c)=0$ otherwise. We naturally require that $\sum_{c\in \calC}\chi(c)=H$.

To illustrate how our methods work, we use a small example of an artificial election consisting of six lists from A to F, three districts from D1 to D3, and a house size of $H=33$ seats.\footnote{This data is actually a modification of the results of the 2021 Constitutional Convention election in Chile, where only the six most-voted lists in districts 10, 11, and 12 were considered with modified district marginals, in an attempt to illustrate the coincidences and differences between methods observed with the full data.} Table \ref{tab:votes} shows the votes disaggregated in all three dimensions.

\begin{table}[t]
\centering
\begin{tabular}{|c|r|r|r|l|c|r|r|r|}
\cline{1-4} \cline{6-9}
\textbf{Female} &
  \multicolumn{1}{c|}{\textbf{D1}} &
  \multicolumn{1}{c|}{\textbf{D2}} &
  \multicolumn{1}{c|}{\textbf{D3}} &
   &
  \textbf{Male} &
  \multicolumn{1}{c|}{\textbf{D1}} &
  \multicolumn{1}{c|}{\textbf{D2}} &
  \multicolumn{1}{c|}{\textbf{D3}} \\ \cline{1-4} \cline{6-9} 
\textbf{A} & 583494 & 365796 & 364104 &  & \textbf{A} & 242112 & 484302 & 145398 \\ \cline{1-4} \cline{6-9} 
\textbf{B} & 61674  & 48078  & 43416  &  & \textbf{B} & 20454  & 18762  & 23238  \\ \cline{1-4} \cline{6-9} 
\textbf{C} & 192546 & 431472 & 857646 &  & \textbf{C} & 231600 & 461640 & 452682 \\ \cline{1-4} \cline{6-9} 
\textbf{D} & 110664 & 227484 & 39774  &  & \textbf{D} & 65268  & 72696  & 36060  \\ \cline{1-4} \cline{6-9} 
\textbf{E} & 0       & 0       & 0       &  & \textbf{E} & 160686 & 0       & 0       \\ \cline{1-4} \cline{6-9} 
\textbf{F} & 0       & 0       & 74064  &  & \textbf{F} & 0       & 0       & 99618  \\ \cline{1-4} \cline{6-9} 
\end{tabular}
\caption{Votes disaggregated by districts, lists, and genders in our example used to illustrate the methods. Lists A to D have candidates of both genders in all districts, while lists E and F each compete in a single district.}
\label{tab:votes}
\end{table}

\subsection{The Chilean Constitutional Convention Method (CCM)}

We start by describing the method used in the recent Chilean Constitutional Convention election on May 15-16, 2021. In this method, the seats of each district $d\in D$ are distributed across the lists according to their votes using the ($1$-dimensional) Jefferson/D’Hondt method, i.e., we compute a vector
\[
    r^d \in \calA^1(\calV_{d,\cdot,+},q_d)
\]
for each $d\in D$. A second distribution of the $r^d_\ell$ seats assigned to each list in each district among its sublists is then performed through a new round of the Jefferson/D'Hondt method and the corresponding number of top-voted candidates of each sublist are provisionally selected; 
if at this point the elected candidates fulfill gender balance,\footnote{Gender balance in this context refers to the same number of candidates of each gender if the number of seats of the district is even and one more man/woman otherwise.} the seats are assigned. Otherwise, the elected candidate with the smallest number of votes among those of the over-represented gender is replaced by the non-elected candidate with the most votes among those of the other gender and the same sublist. This replacement procedure is repeated until gender balance is achieved. We refer to the Chilean electoral laws \citep{ley2015,ley2020} for the legal description of the method, and to \cite{mathieu2022apportionment} for an algorithmic and axiomatic analysis. 

\subsection{The $3$-proportional Method (TPM)}
\label{subsubsec:TPM}

The $3$-proportional method directly applies the notion of multidimensional proportionality introduced in Section \ref{sec:prelims} for the case $d=3$ with $N_1=D,\ N_2=L$ and $N_3=G$.
As described there, a $3$-proportional apportionment does not always exist, but an $\alpha$-approximate $3$-proportional apportionment is guaranteed to exist as long as $\sum_{i=1}^3 1/(\alpha_i+2)\le 1$. 
In particular, this holds when $\alpha_1=1$, $\alpha_2=0$ and $\alpha_3=4$. 

In order to properly define the method, we need to define the marginals. For the case of districts, we naturally set $m_d=q_d$ for each $d\in D$. For each list $\ell\in L$, we let $m_\ell$ be the number of seats corresponding to the total number of votes that the list gets in the country according to the Jefferson/D'Hondt method, i.e., $m_\ell = (\calA^1(\calV_{+,\cdot,+},H))_\ell$. For each gender we aim to assign half of the seats; when $H$ is odd we simply break the tie in favor of the gender obtaining the highest number of votes:
\[
    m_g = \begin{cases}
        \left\lceil \frac{H}{2} \right\rceil & \text{if } \calV_{+,+,g} > \frac{1}{2}\calV_{+,+,+},\\
        \left\lfloor \frac{H}{2} \right\rfloor & \text{otherwise.}
    \end{cases}
\]
We then aim to find a $3$-proportional apportionment with no deviation from the districts and list marginals and the least possible deviation from gender marginals, i.e., we set $\alpha_1=\alpha_2=0$ and
\[
    \alpha_3=\min\{z\in \{0,\ldots,4\}: \calA^3(\calV,m,H;(0,0,z)) \not= \emptyset\},
\]
and we naturally consider $x=\calA^3(\calV,m,H;\alpha)$. In case $\calA^3(\calV,m,H;(0,0,z)) = \emptyset$ for every \mbox{$z\in \{0,\ldots,4\}$}, we allow a deviation of $1$ with respect to district marginals, i.e., we fix $\alpha_1=1,~ \alpha_2=0$, and
\[
    \alpha_3=\min\{z\in \{0,\ldots,4\}: \calA^3(\calV,m,H;(1,0,z)) \not= \emptyset\},
\]
and we return $x=\calA^3(\calV,m,H;\alpha)$. The correctness of this procedure is guaranteed by the aforementioned existence result by \citet{cembrano2022multidimensional}. Once the seats $x_{d\ell g}$ for each tuple $(d,\ell,g)\in D\times L \times G$ are fixed,
they are distributed among the sublists, as before, through a new round of the ($1$-dimensional) Jefferson/D'Hondt method. The corresponding number of top-voted candidates of each district, sublist, and gender are then selected.

The result of the described method is shown in Table \ref{tab:result-TPM}.
\begin{table}[t]
\centering
\begin{tabular}{|c|c|c|c|c|c|c|c|c|}
\cline{1-4} \cline{6-9}
\textbf{Female} & \textbf{D1} & \textbf{D2} & \textbf{D3} &  & \textbf{Male} & \textbf{D1} & \textbf{D2} & \textbf{D3} \\ \cline{1-4} \cline{6-9} 
\textbf{A} & 3 & 2 & 2 &  & \textbf{A} & 1 & 4 & 1 \\ \cline{1-4} \cline{6-9} 
\textbf{B} & 1 & 0 & 0 &  & \textbf{B} & 0 & 0 & 0 \\ \cline{1-4} \cline{6-9} 
\textbf{C} & 0 & 2 & 5 &  & \textbf{C} & 1 & 4 & 3 \\ \cline{1-4} \cline{6-9} 
\textbf{D} & 0 & 2 & 0 &  & \textbf{D} & 0 & 1 & 0 \\ \cline{1-4} \cline{6-9} 
\textbf{E} & 0 & 0 & 0 &  & \textbf{E} & 0 & 0 & 0 \\ \cline{1-4} \cline{6-9} 
\textbf{F} & 0 & 0 & 0 &  & \textbf{F} & 0 & 0 & 1 \\ \cline{1-4} \cline{6-9} 
\end{tabular}
\caption{Number of seats assigned to candidates of each tuple given by a district, a list, and a gender in the $3$-proportional method described in Section \ref{subsubsec:TPM}.}
\label{tab:result-TPM}
\end{table}
This solution is actually found with deviations $\alpha_1 = \alpha_2 = \alpha_3 = 0$, so if we sum the apportionment across gender and lists we obtain the corresponding district magnitudes of 6, 15, and 12; the gender allocation respects the rule with 17 and 16 seats for women and men; and the list allocation is the one given by the Jefferson/D'Hondt method with a multiplier $\lambda = 6\cdot 10^{-6}$.

\subsection{The $3$-proportional Method with Threshold (TPM3)}
\label{subsubsec:TPM3}

In this method, we incorporate a threshold on the percentage of votes obtained by a list in order to be eligible for the apportionment on top of the $3$-proportional method.
More specifically, we only include in the process the set of lists $\ell\in L$ that obtain at least a 3\% of the votes. In order to implement this, we set $U_\ell=0$ for every $\ell\in L$ with $\calV_{+,\ell,+} < 0.03 \calV_{+,+,+}$, $U_\ell=H$ for every other list $\ell$, and $I_\ell=0$ for every $\ell\in L$. We then set $m_\ell=(\calA^1(\calV_{+,\cdot,+},H,I,U))_\ell$ and run the rest of the method in the same way as TPM.

\subsection{The $3$-proportional Method with Plurality (TPP)}
\label{subsubsec:TPP}

In this method, we modify the $3$-proportional method to ensure that the top-voted candidate of each district is elected.
The procedure is identical to TPM except that now we add $I,U\in \NN^{D\times L\times G}$ as arguments when applying $\calA^3$. Letting
\[ 
    c^*(d) = \arg\max_{c\in \calC: \dis(c)=d} v(c)
\]
denote the top-voted candidate of district $d$ for each $d\in D$, $I$ is defined as
\[
    I_{d\ell g} = \begin{cases}
        1 & \text{if } \lis(c^*(d))=\ell \text{ and } \gen(c^*(d)) = g,\\
        0 & \text{otherwise,}
    \end{cases} \qquad \text{for every } (d,\ell,g)\in D\times L\times G,
\]
and $U_{d\ell g} = H$ for every $(d,\ell,g)\in D\times L\times G$. This modification guarantees that $c^*(d)$ is elected for every $d\in D$.

\subsection{The $3$-proportional Method with Threshold and Plurality (TPP3)}\label{subsubsec:TPP3}

In this method, we ensure that the top-voted candidate of each district is elected as in TPP and we consider the threshold condition as in TPM3.
In order to implement this, we proceed in the same way as described in Section \ref{subsubsec:TPP}, except that we first compute the list marginals in order to respect the threshold condition, ensuring that each list below the threshold gets precisely its amount of top-voted candidates. Specifically, we define $I_{d\ell g}$ and $U_{d\ell g}$ as in Section \ref{subsubsec:TPP}, we set $U'_\ell=\sum_{d\in D}\sum_{g\in G} I_{d\ell g}$ for every $\ell\in L$ with $\calV_{+,\ell,+} < 0.03 \cdot \calV_{+,+,+}$, $U'_\ell=H$ for every other list $\ell$, and $I'_\ell=0$ for every $\ell\in L$. We then set $m_\ell=(\calA^1(\calV_{+,\cdot,+},H,I',U'))_\ell$ and run the rest of the method in the same way as TPP.

\paragraph{Comparison summary.} Table \ref{tab:methods-summary} summarizes the results for each method and Figure \ref{fig:apports} shows them in further detail, with the corresponding number of seats by list, district, and gender. The figure shows how the $3$-proportional methods achieve gender parity only across districts and lists, while the constraint is active in each district in the case of CCM. We can also observe how the threshold prevents list F from having a seat that was originally allocated by proportionality. List E does not have enough votes to obtain representation in TPM but gets a seat in both TPP and TPP3 due to having the most-voted candidate in district 1.\footnote{Observe that this happens despite the list getting less than 3\% of the votes as well, as the method fixes the number of seats for the lists with less than 3\% of the votes to their number of top-voted candidates.}
\begin{table}[t]
\centering
\begin{tabular}{c|r|r|r|r|r|r|}
\cline{2-6}
\textbf{} &
  \multicolumn{1}{c|}{\textbf{CCM}} &
  \multicolumn{1}{c|}{\textbf{TPM}} &
  \multicolumn{1}{c|}{\textbf{TPM3}} &
  \multicolumn{1}{c|}{\textbf{TPP}} &
  \multicolumn{1}{c|}{\textbf{TPP3}} \\ \hline
\multicolumn{1}{|c|}{\textbf{A}} & 13  & 13 & 13 & 12 & 13 \\ \hline
\multicolumn{1}{|c|}{\textbf{B}} & 0   & 1  & 1  & 1  & 1  \\ \hline
\multicolumn{1}{|c|}{\textbf{C}} & 17  & 15 & 16 & 15 & 15 \\ \hline
\multicolumn{1}{|c|}{\textbf{D}} & 2   & 3  & 3  & 3  & 3  \\ \hline
\multicolumn{1}{|c|}{\textbf{E}} & 0   & 0  & 0  & 1  & 1  \\ \hline
\multicolumn{1}{|c|}{\textbf{F}} & 1   & 1  & 0  & 1  & 0  \\ \hline
\end{tabular}
\caption{Comparison of seat allocation by list under each method.}
\label{tab:methods-summary}
\end{table}
\begin{figure}[h!]
    \centering
    \includegraphics[width=\textwidth]{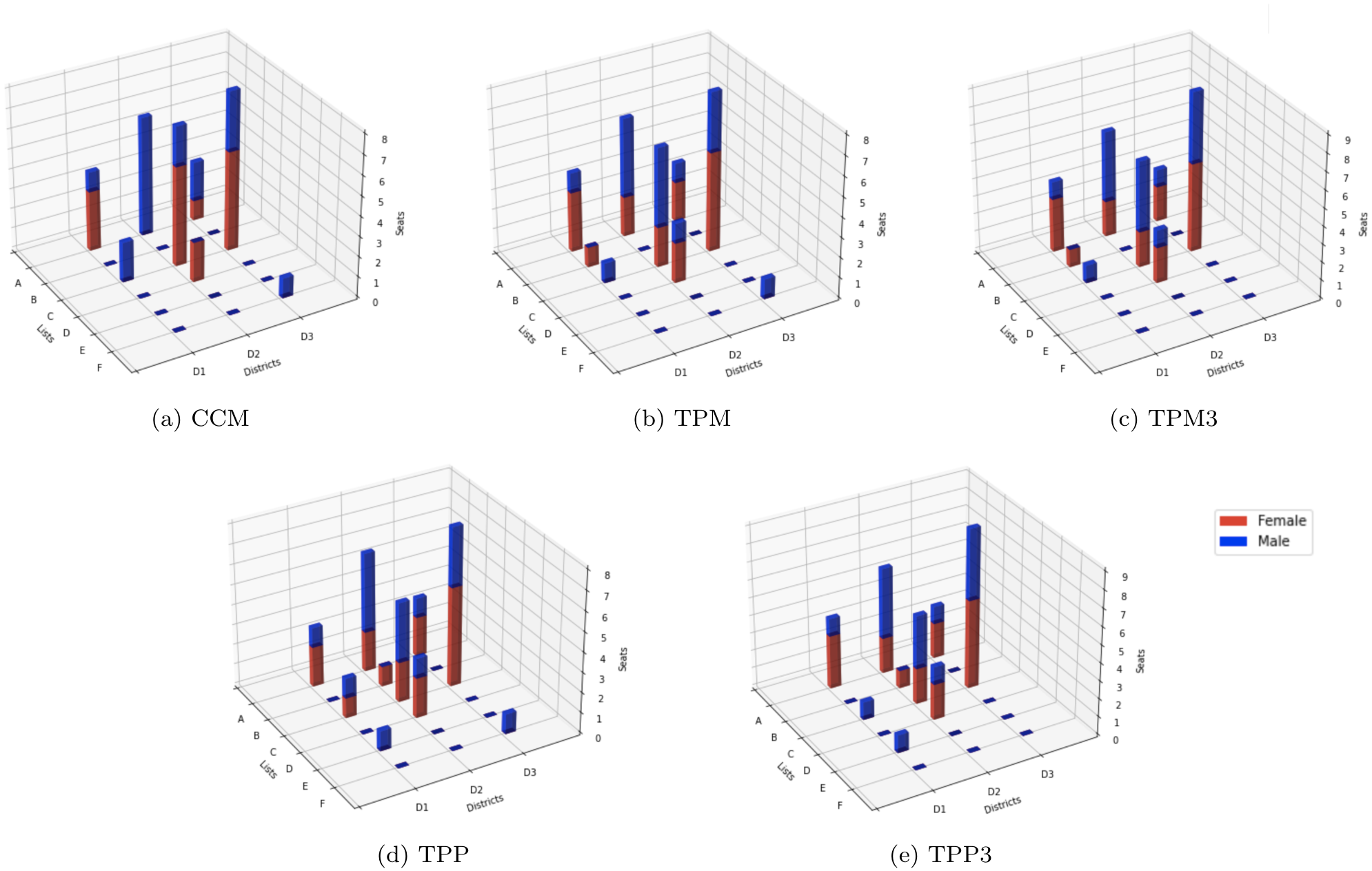}
    \caption{Comparison of the apportionment of seats in each method, by lists, districts, and gender. Each bar height represents the number of seats obtained by a list in a certain district, segmented by the number of seats corresponding to men (blue) and women (red).}
    \label{fig:apports}
\end{figure}

\section{Experimental Results and Analysis}\label{sec:results}

In this section, we study the outcomes of the methods described above with the data from the 2021 Chilean Constitutional Convention election. We evaluate their performance in terms of the proportionality of the results, the representativeness of the elected candidates, and the value of each cast vote. We also use the election data to test the deviations from the marginals required by the methods based on multidimensional proportionality.

We first give some context of the Chilean Constitutional Convention election.  While $155$ seats were to be allocated in total, $17$ of these seats were reserved for ethnic minority groups and assigned through a parallel election; our analysis thus focuses on the $138$ non-ethnic seats. The country is divided into 28 electoral districts, each receiving between $3$ and $8$ seats. Candidates run in open lists and each voter votes for at most one candidate.

A total of $70$ lists and over $1300$ candidates competed in the election. Among them, only $20$ lists and two independent candidates\footnote{These candidates are denoted as IND1 and IND9 (according to the districts in which they run).} obtained enough votes to be elected in any of the methods described in Section \ref{sec:methods}. For exposition purposes, in Sections \ref{subsec:prop} and \ref{subsec:representativeness} we omit the other lists and independent candidates, which does not affect our results.\footnote{None of these lists and independent candidates obtained more than $0.51\%$ of the votes, and they jointly represent less than the $10\%$ of the total votes.} Among the main political alliances, we encounter the XP list, which encompasses traditional and newer right-wing parties, the YB list, formed by the center-left parties that governed Chile between 1990 and 2010, and the YQ list, which contains left-wing parties. In addition, two conglomerates of lists containing politically independent candidates obtained considerable support; following their campaign we denote them by LP (for \emph{Lista del Pueblo}) and INN (for \emph{Independientes No Neutrales}).\footnote{This grouping is standard; see, e.g., \url{https://2021.decidechile.cl/\#/ev/2021}. 
Full election data is available on the website of \citet{servel2021data}.}
Besides these two lists, we use the election codes to represent the lists.

We remark that a deviation of one seat from the gender marginals was needed when running TPM3 and TPP3 on the electoral data, while no deviation was needed in the case of TPM and TPP. In Section \ref{subsec:deviations}, we will see that this is no coincidence, as deviations needed in practice tend to be much smaller than the worst-case guarantees.

\subsection{Proportionality}\label{subsec:prop}

Proportionality in the election results is measured through the difference between the induced political distribution and the ($d$-dimensional) fair share, which was introduced in Section \ref{sec:prelims} and constitutes a natural notion of perfectly fair distribution. Figure \ref{fig:parliaments} provides a graphical view of the political representation obtained by the lists under each method and this perfectly fair political distribution. The data used for this figure is contained in Table \ref{tab:votes-bylist} in Appendix \ref{app:tables}.

\begin{figure*}[t]
     \centering
     \begin{subfigure}[b]{0.3\textwidth}
         \centering
         \includegraphics[scale=.13]{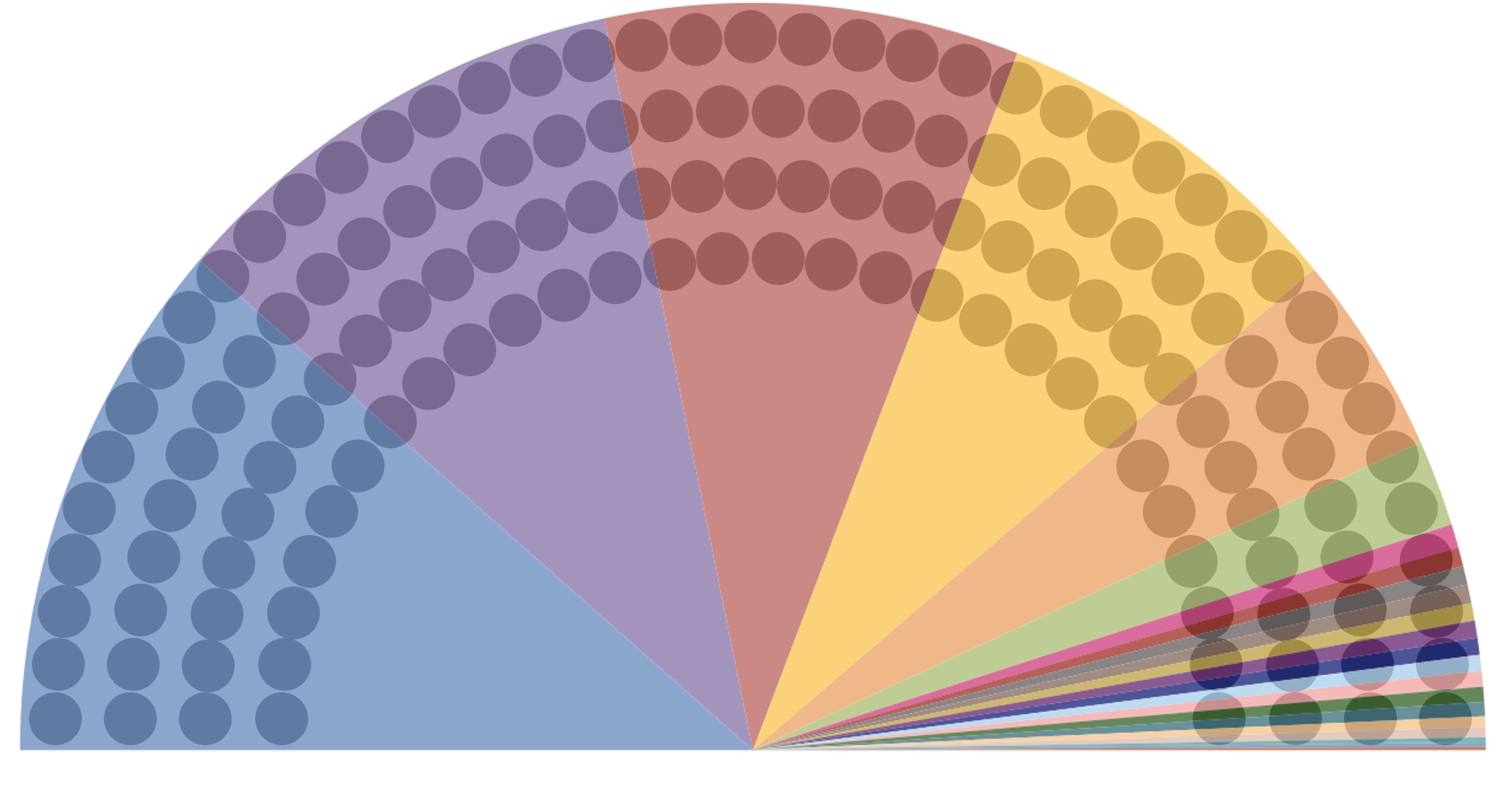}
         \caption{Fair share}
     \end{subfigure}
     \hfill
     \begin{subfigure}[b]{0.3\textwidth}
         \centering
         \includegraphics[scale=.13]{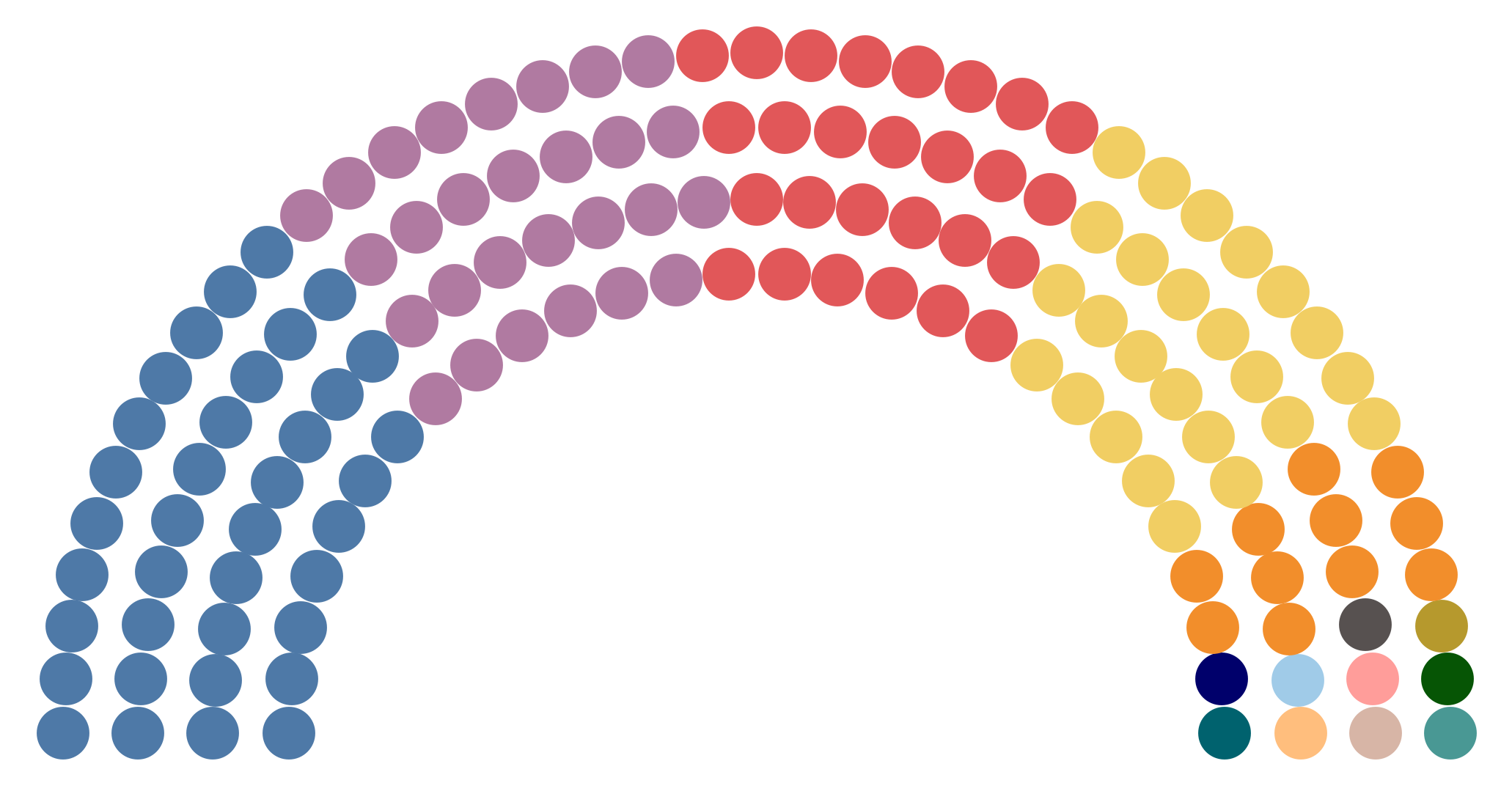}
         \caption{CCM}
     \end{subfigure}
     \hfill
     \begin{subfigure}[b]{0.3\textwidth}
         \centering
         \includegraphics[scale=.13]{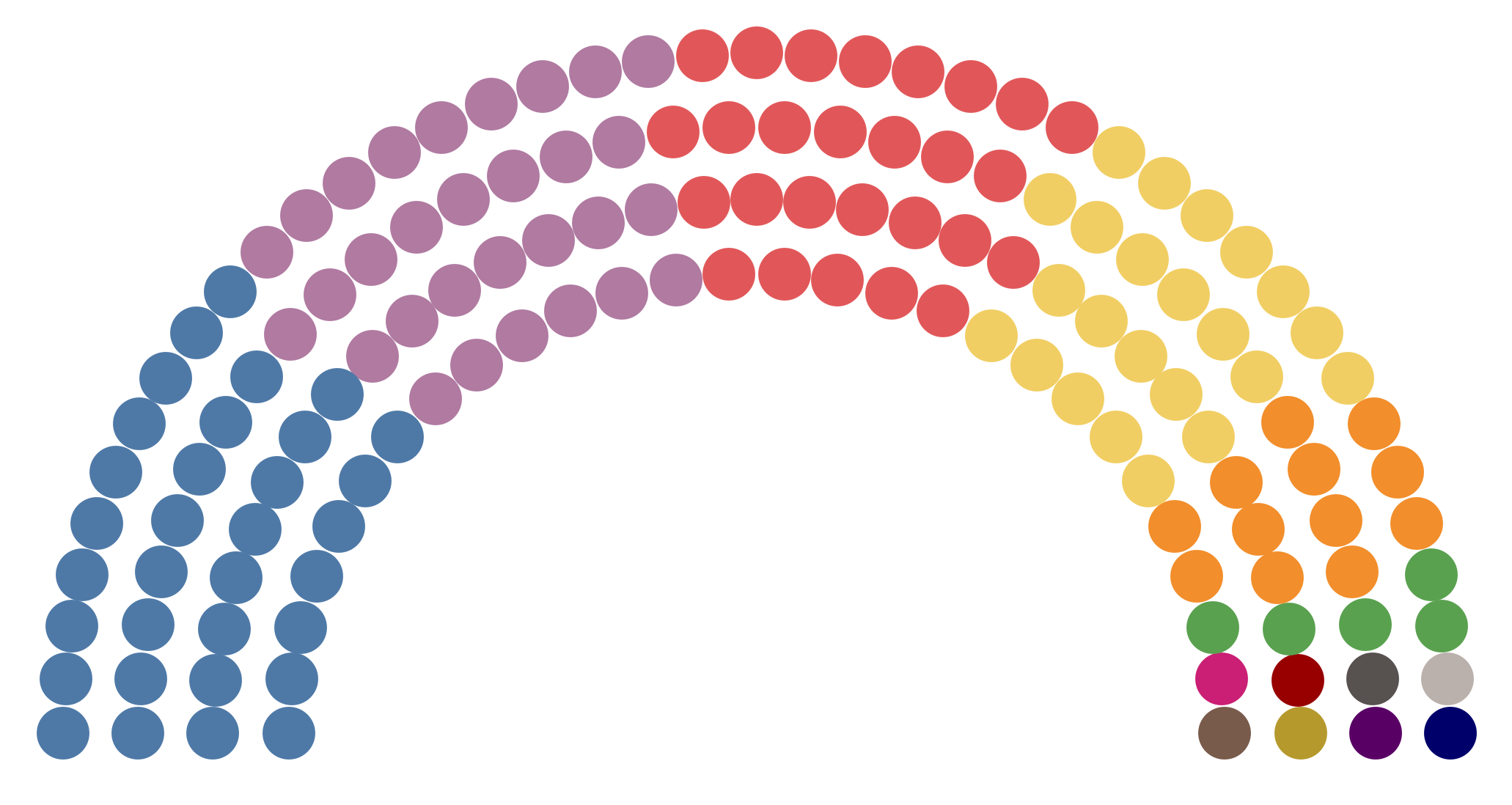}
         \caption{TPM}
     \end{subfigure}
     \hfill
     \begin{subfigure}[b]{0.3\textwidth}
         \centering
         \includegraphics[scale=.13]{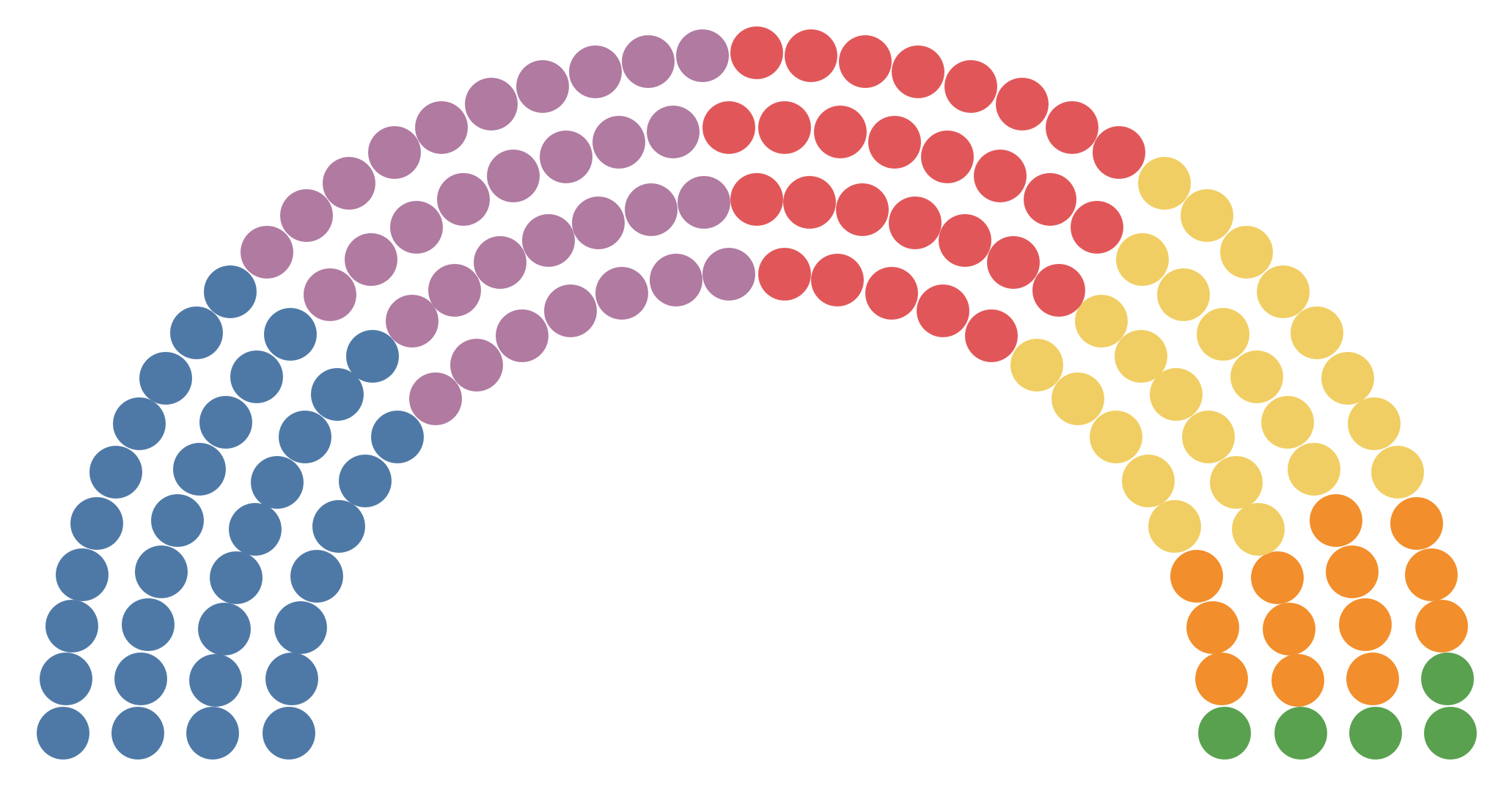}
         \caption{TPM3}
     \end{subfigure}
     \hfill
     \begin{subfigure}[b]{0.3\textwidth}
         \centering
         \includegraphics[scale=.13]{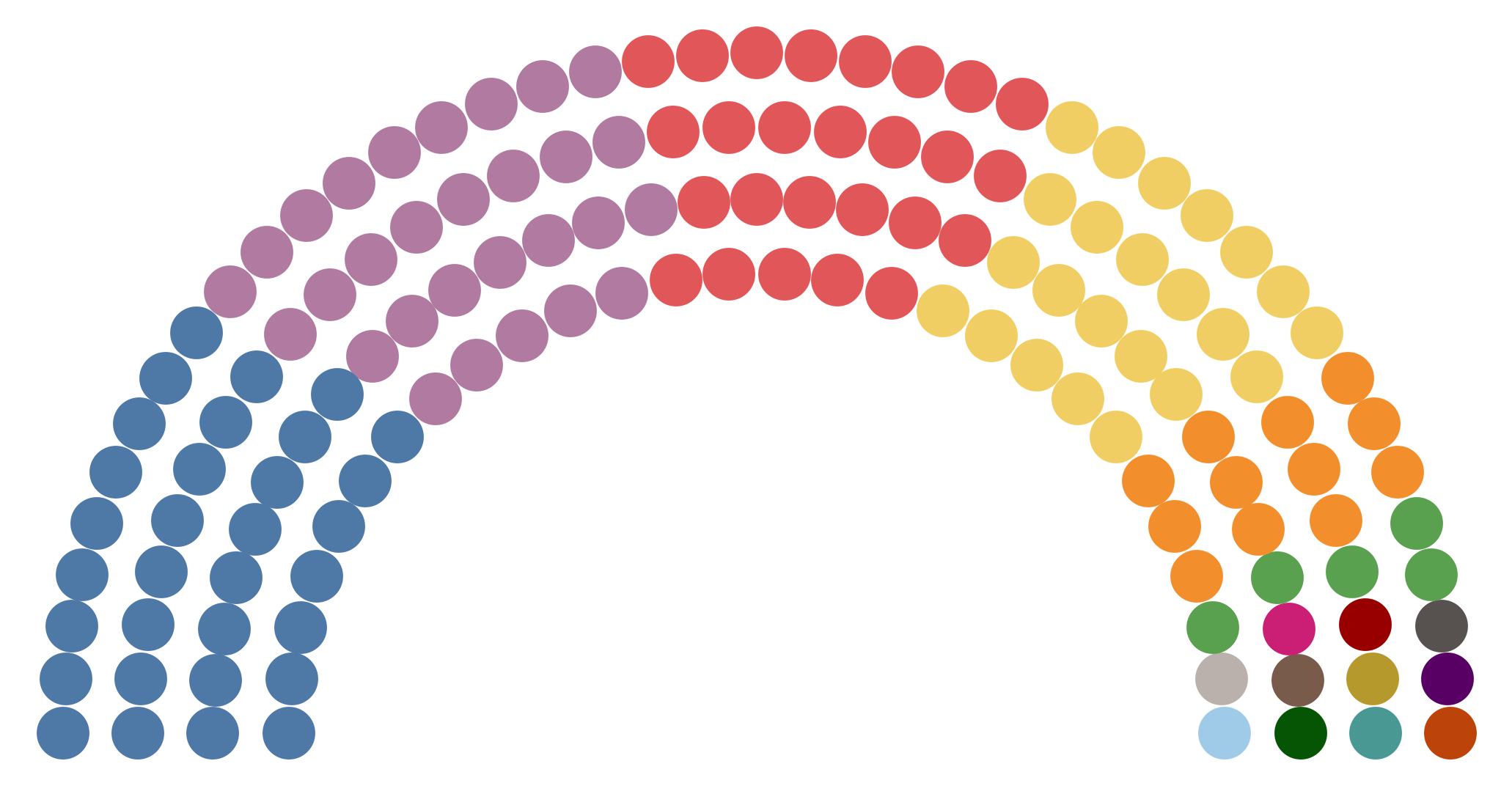}
         \caption{TPP}
     \end{subfigure}
     \hfill
     \begin{subfigure}[b]{0.3\textwidth}
         \centering
         \includegraphics[scale=.13]{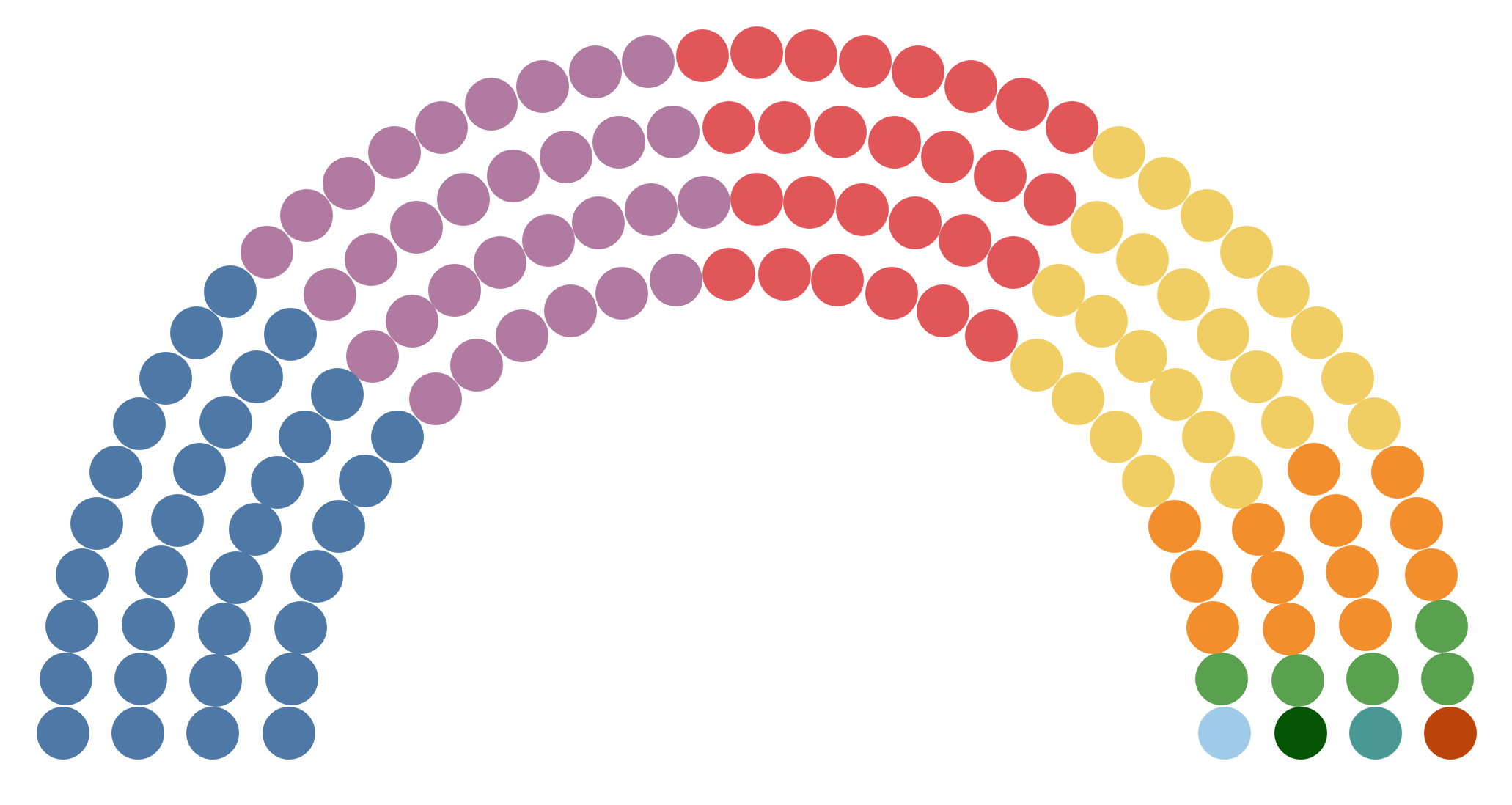}
         \caption{TPP3}
     \end{subfigure}
     \begin{subfigure}[b]{\textwidth}
         \centering
         \includegraphics[scale=.5]{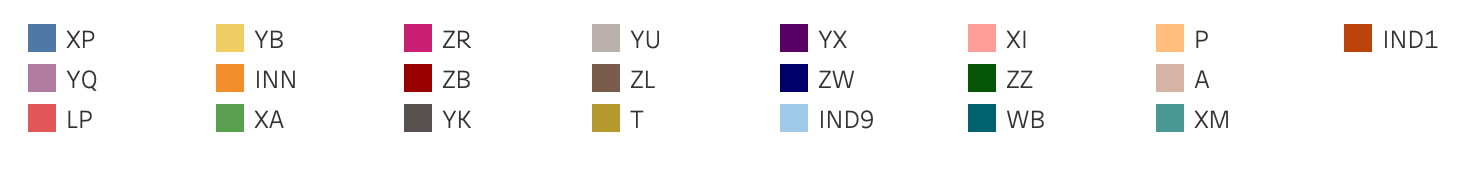}
     \end{subfigure}
        \caption{Political representation obtained by list under each method and fair share apportionment for comparison.}
\label{fig:parliaments}
\end{figure*}

It is observed that the global methods, especially those without a threshold (TPM and TPP), produce a political distribution much closer to the fair share than the local method CCM. In particular, the most-voted list XP is not as overrepresented as in CCM, and seats are assigned to a subset of the top-voted lists as close to proportionality as possible.
As an illustrative example, one may observe that the list XA---the sixth most-voted one with almost 4\% of the votes---receives five seats under the global methods and no seat under CCM. The result under TPM and TPP is a highly varied convention made up of large political blocks together with multiple lists that obtain a single representative.
While TPM3 considerably reduces the number of lists entering the convention, the allocation produced by TPP3 stands as a midpoint between those produced by TPM3 and TPP, leaving out lists with too few votes at a national level but allowing the representation of strong local projects. In particular, the XM list, which is a project of the southernmost part of Chile (Magallanes), as well as the independents from districts 1 and 9, which were the first majorities of their districts, enter the convention.

The deviation from the fair share can be naturally quantified through the Euclidean distance between a specific apportionment and the fair share, a value called \textit{Gallagher Index} in the literature. This deviation can be measured either globally in the country or locally within each district (taking the average thereof). We call the respective values \textit{global Gallagher Index} and \textit{local Gallagher Index}. Furthermore, we can define a $3$-dimensional Gallagher Index in order to evaluate the proportional allocation of seats across districts, lists, and genders simultaneously: We take the Euclidean distance between the apportionment and the $3$-dimensional fair share defined in Section \ref{sec:prelims}.

Table \ref{tab:props} shows the described indices for each method.\footnote{The Gallagher Index computed by district can be found in Table \ref{tab:GI-by-dist} in Appendix \ref{app:tables}.}
\begin{table}[t]
\centering
\begin{tabular}{r|r|r|r|r|r|}
\cline{2-6}
\multicolumn{1}{l|}{} &
  \multicolumn{1}{c|}{\textbf{CCM}} &
  \multicolumn{1}{c|}{\textbf{TPM}} &
  \multicolumn{1}{c|}{\textbf{TPM3}} &
  \multicolumn{1}{c|}{\textbf{TPP}} &
  \multicolumn{1}{c|}{\textbf{TPP3}} \\ \hline
\multicolumn{1}{|r|}{Global GI} & 4.6  & 1.8  & 3.8  & 1.3 & 2.9  \\ \hline
\multicolumn{1}{|r|}{Local GI avg.}  & 14.5 & 18.5 & 19.0 & 18.2 & 18.5 \\ \hline
\multicolumn{1}{|r|}{$3$-dim GI}  & 3.9  & 3.6  & 3.8  & 3.6  & 3.8  \\ \hline
\end{tabular}
\caption{Gallagher Index (\%) computed for the global political distribution, the average of the local political distributions, and the $3$-dimensional seat distribution.}
\label{tab:props}
\end{table}
Naturally, the district averages show that the local method (CCM) is closer to the local fair share than global methods, as the former is designed to pursue local proportionality. However, the local errors produced by CCM add up across districts, leading to a considerably larger global Gallagher Index than global methods---particularly than TPM and TPP---and a slightly larger $3$-dimensional index. This is again natural, as TPM and TPP compute the seats to be allocated to each list in a global manner, thus leading to a political distribution that is closer to proportionality on a national level. In fact, TPP ends up with the lowest global deviation followed by the TPM, while TPP3 again stands as a midpoint between the methods with and without a threshold in this regard.
In terms of the $3$-dimensional index, the results are similar between all the methods, with differences of at most 0.3 percentage points. This similarity can be explained by the fact that the $3$-proportional methods generate a good allocation to lists but produce local distortions.
Conversely, the national allocation to lists does not adjust correctly to the votes in CCM but the district allocation does, highlighting a fundamental trade-off between local and global political representation.

\subsection{Representativeness}
\label{subsec:representativeness}

In this section, we analyze the average of votes and percentages obtained by the elected candidates under each method. These values are summarized in Table \ref{tab:average-votes}. We additionally include, as a reference for the maximum attainable representativeness given the constraints of the election, an apportionment method that chooses the most-voted candidates constrained to district marginals and global gender parity. We denote this method by \textit{Greedy}.

\begin{table}[t]
\centering
\begin{tabular}{r|r|r|r|r|r|r|}
\cline{2-7}
\multicolumn{1}{l|}{} &
  \multicolumn{1}{c|}{\textbf{CCM}} &
  \multicolumn{1}{c|}{\textbf{TPM}} &
  \multicolumn{1}{c|}{\textbf{TPM3}} &
  \multicolumn{1}{c|}{\textbf{TPP}} &
  \multicolumn{1}{c|}{\textbf{TPP3}} &
  \multicolumn{1}{c|}{\textbf{Greedy}}\\ \hline
\multicolumn{1}{|r|}{Avg. votes} & 14126 & 14048 & 13848 & 14406 & 14198 & 15227  \\ \hline
\multicolumn{1}{|r|}{Avg. district \%} & 31.2\% & 31.0\% & 30.7\% & 32.1\% & 31.7\%  & 37.6\% \\ \hline
\end{tabular}
\caption{Average votes obtained by the elected candidates under each method, and the average percentage of votes obtained by elected candidates with respect to votes cast in their districts.}
\label{tab:average-votes}
\end{table}

By construction, all methods that involve correction mechanisms in order to ensure gender parity imply a certain degree of loss of votes, since candidates who may have been elected without corrections are substituted by other non-elected ones with a lower number of votes. 
A relevant observation is the fact that TPP and TPP3 achieve a higher average than CCM, despite the greedy nature of the latter: It just replaces candidates when necessary and in a local manner. One explanation is the presence of locally top-voted candidates who are not elected in other methods and obtain a large number of votes. This is particularly relevant since, in addition to the property of plurality---and a representation threshold in the case of TPP3---these methods obtain the best results in terms of representation, followed by CCM and TPM.
It is observed that the threshold decreases the average votes in this case. This behavior, however, is not a direct consequence of its application but rather depends on the instance. In this particular case, the negative effect on candidates of small lists who were not elected due to this threshold was more important than the positive effect on candidates of bigger lists. 

\subsection{One Person, One Vote?} 
\label{subsec:value-vote}

In this section, we compare the value of a cast vote under each method in terms of political impact.\footnote{Unlike the other subsections, in this one the votes of the lists without enough votes to obtain a seat under some method are considered, as they are relevant for the measure.} Under local methods (CCM), each vote counts only for electing the seats assigned to the corresponding district, and therefore its power can be measured by the ratio between the number of seats assigned to the district and the number of votes cast in the district. Figure \ref{fig:vote-value} shows this indicator for each district, divided by the overall ratio between the house size and votes to normalize. We observe that the votes of people living in central districts are less powerful, as defined previously, than the votes of people living in extreme districts, reaching a factor of up to 5.93 between them. Conversely, global methods without plurality ensure that, in terms of political representation, every vote is equally valuable as seats are assigned to political lists proportionally to the number of votes they receive. Global methods with plurality combine both features---global representation and election of locally top-voted candidates---so each vote is valuable in terms of national political distribution but has special relevance for the district where it is cast.

\begin{figure*}[t]
	\centering
	\includegraphics[scale=.65]{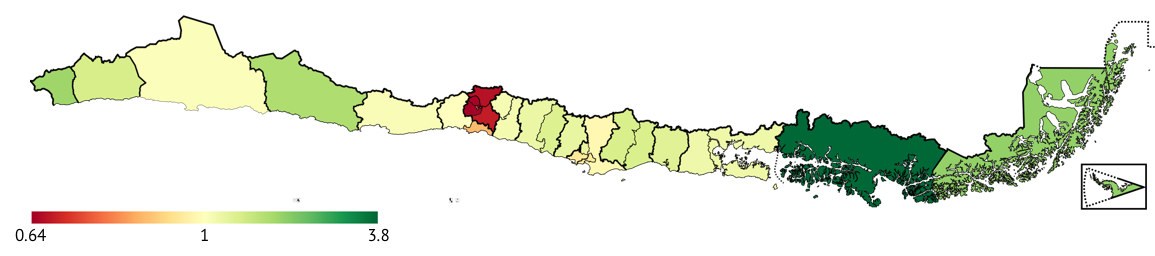}
	\caption{Vote power by district as ratio of global vote power.}
	\label{fig:vote-value}
\end{figure*}

This is related to the aforementioned trade-off between local and global representation. Incorporating the different criteria as additional dimensions, rather than running separate elections, may be a reasonable way to get closer to the widely-known principle of {\it one person, one vote}.

\subsection{Deviations in the Number of seats}
\label{subsec:deviations}

We finally conduct simulations to assess the deviations from the marginals that the proposed methods require in practice. Our simulations consider four vote models with different distributions and the same instance size as the original election (28 districts, 22 lists, 2 genders, and a house size of 138 seats). Denoting by $\mu$ the mean of the votes $\calV_{d\ell g}$ and $\calV_{d\ell g}^n$ the $n$-th random sample from the distribution $F$, we consider the following models:

\begin{enumerate}
    \item $\calV_{d\ell g}^n = \calV_{d\ell g}\cdot X_n$, where X is drawn from $\mathrm{Normal}(1, 0.1)$, and $\calV_{d\ell g}$ are the original votes;
    \item $\calV_{d\ell g}^n$ is drawn from $\mathrm{Poisson}(\theta_{d \ell})$, and $\theta_{d \ell} \sim \mathrm{Gamma}(1, \mu)$;
    \item $\calV_{d\ell g}^n$ is drawn from $\mathrm{Uniform}(1, 2\mu)$;
    \item $\calV_{d\ell g}^n$ is drawn from $\mathrm{Pareto}(0.5, 1000)$.
\end{enumerate}

The first instance aims to use a normal perturbation of the original votes. The second distribution emulates an election with heterogeneous preferences for lists across different districts, governed by the parameters $\theta_{d\ell}$. In the third and fourth models, we consider equally distributed entries, sampled from a uniform or a heavy-tailed Pareto distribution, respectively. Note that the number of votes and the number of seats per district may be too inconsistent in the last three models; to correct this, we scale each entry $(d,\ell,g)$ of each generated matrix by $\frac{P_d}{P_+} \cdot \frac{\calV_{+++}}{\calV_{d++}}$, where $P_d$ is the population of district $d$ and $P_+$ the total population (according to the last official census in 2017). We set the simulations to aim for a similar mean to the original vote matrix ($\mu \approx 4200$). Model 4 does not have a finite mean, but the median is similar.

In Figure \ref{fig:deviations} we can observe that, despite the upper bound of $\alpha=(1,0,4)$ on the deviations needed to find an approximate $3$-proportional apportionment, no deviation is needed in most simulations to find an output of our methods. When this is not the case, no deviation from the district marginals and a deviation of only one seat---or very rarely, two or three seats---from the gender marginals is required. 
The vote instances drawn from the uniform and Pareto distributions tend to perform slightly worse in this sense, which can be understood as a difficulty in finding multidimensional proportional apportionments when the preferences for lists and genders are distributed similarly overall but inconsistently across districts.

\begin{figure}[t]
    \centering
    \begin{subfigure}[b]{0.49\textwidth}
        \centering
        \includegraphics[width=\linewidth]{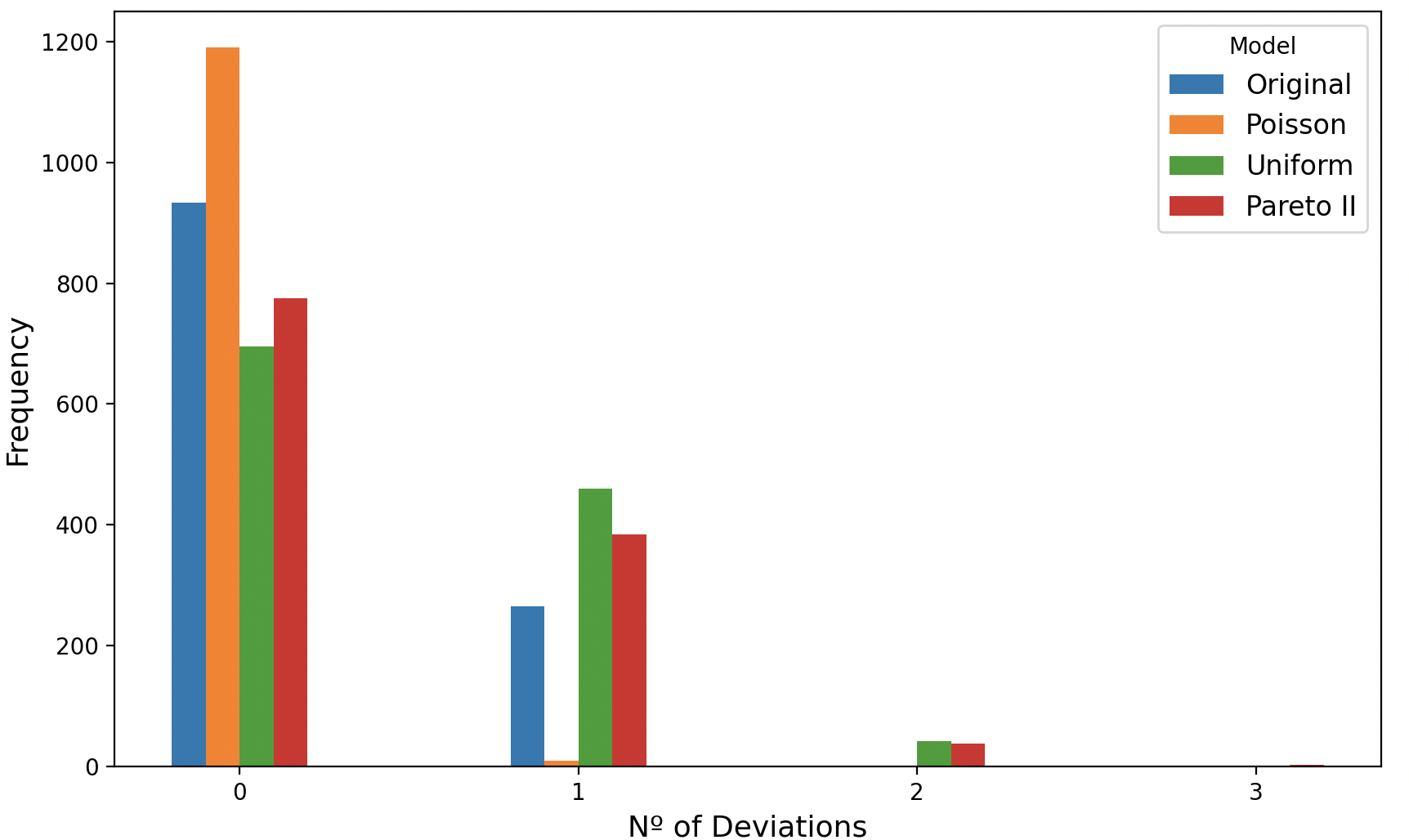}
        \caption{TPM}
    \end{subfigure}
    \hfill
    \begin{subfigure}[b]{0.49\textwidth}
        \centering
        \includegraphics[width=\linewidth]{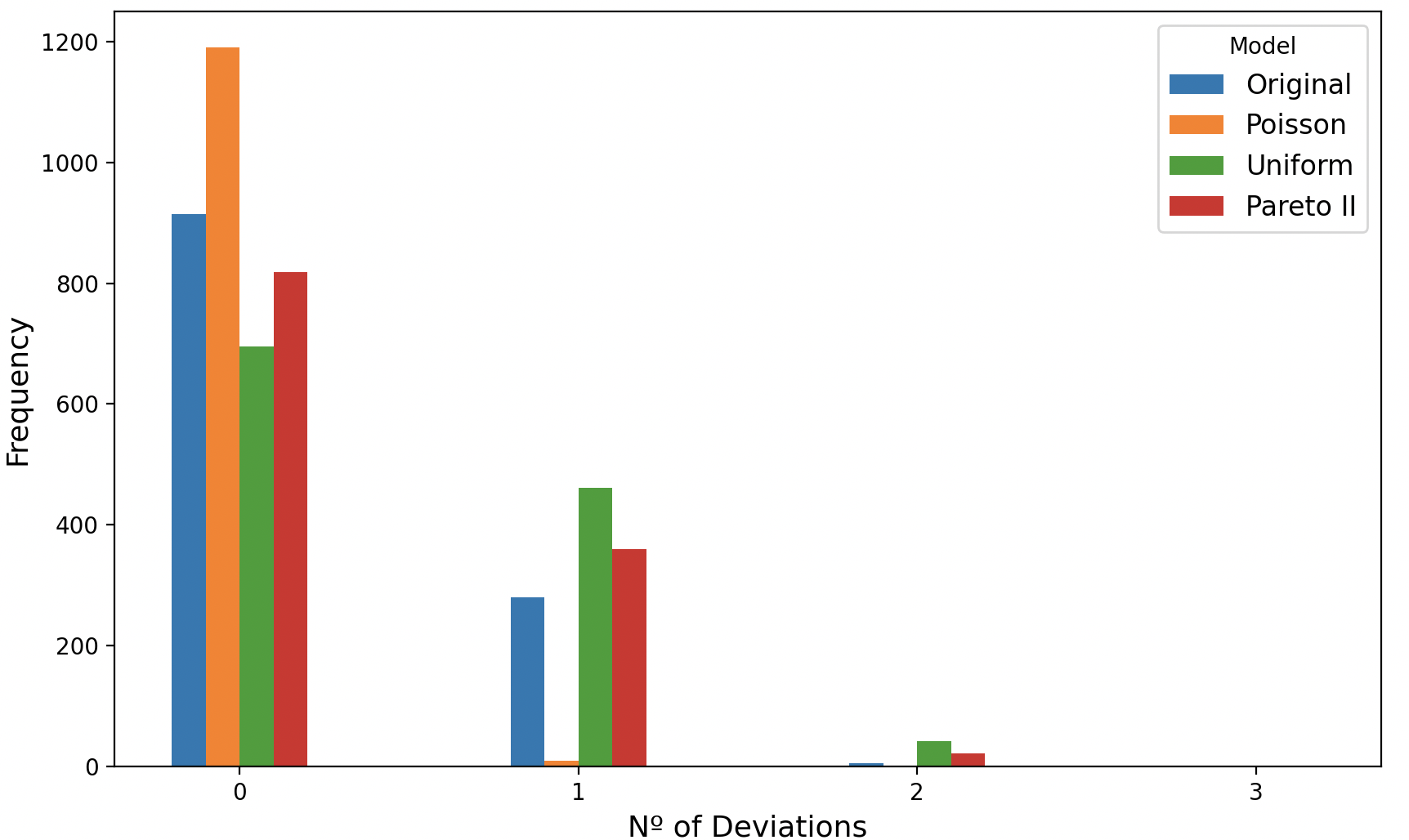}
        \caption{TPM3}
    \end{subfigure}
    \begin{subfigure}[b]{0.49\textwidth}
        \centering
        \includegraphics[width=\linewidth]{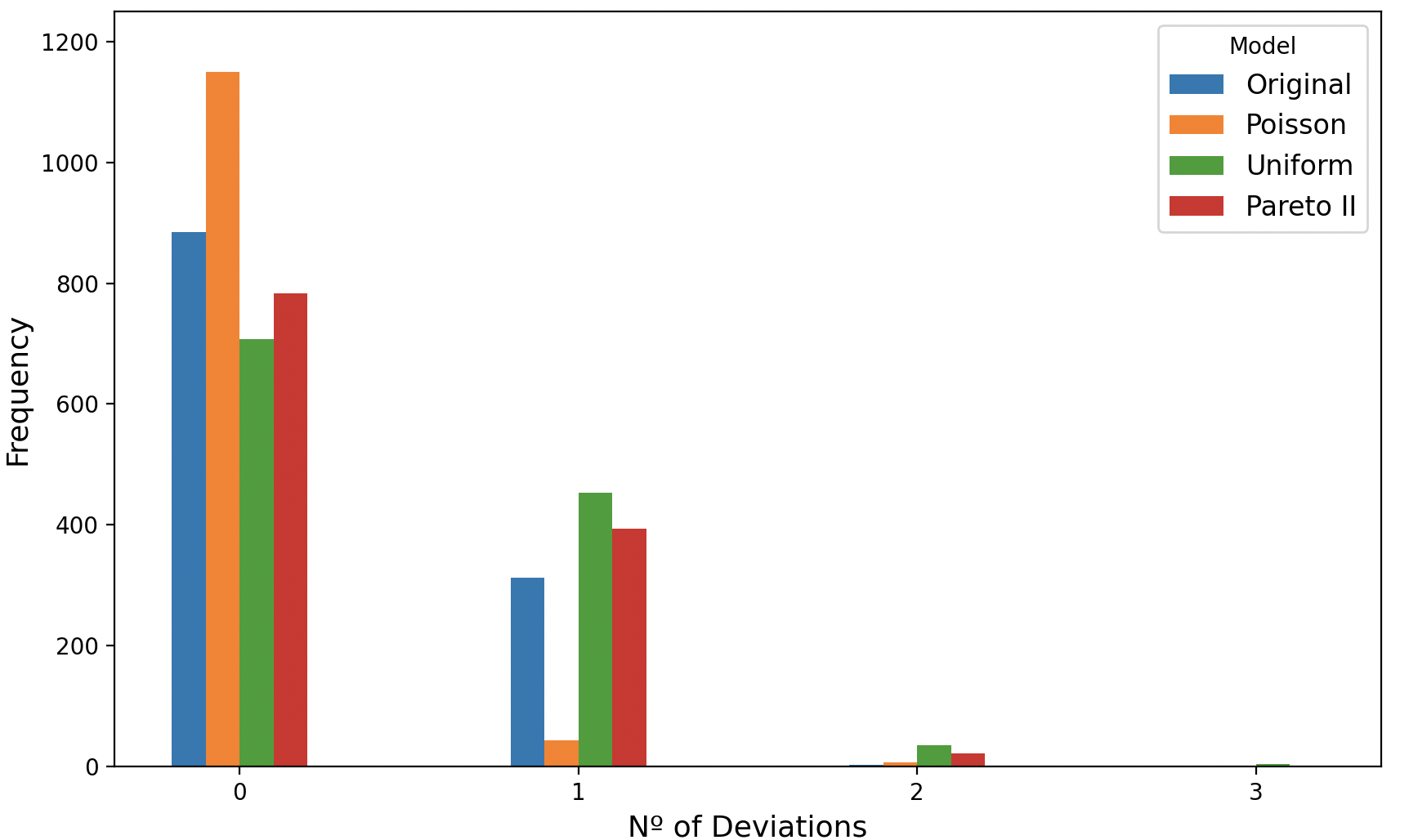}
        \caption{TPP}
    \end{subfigure}
    \hfill
    \begin{subfigure}[b]{0.49\textwidth}
        \centering
        \includegraphics[width=\linewidth]{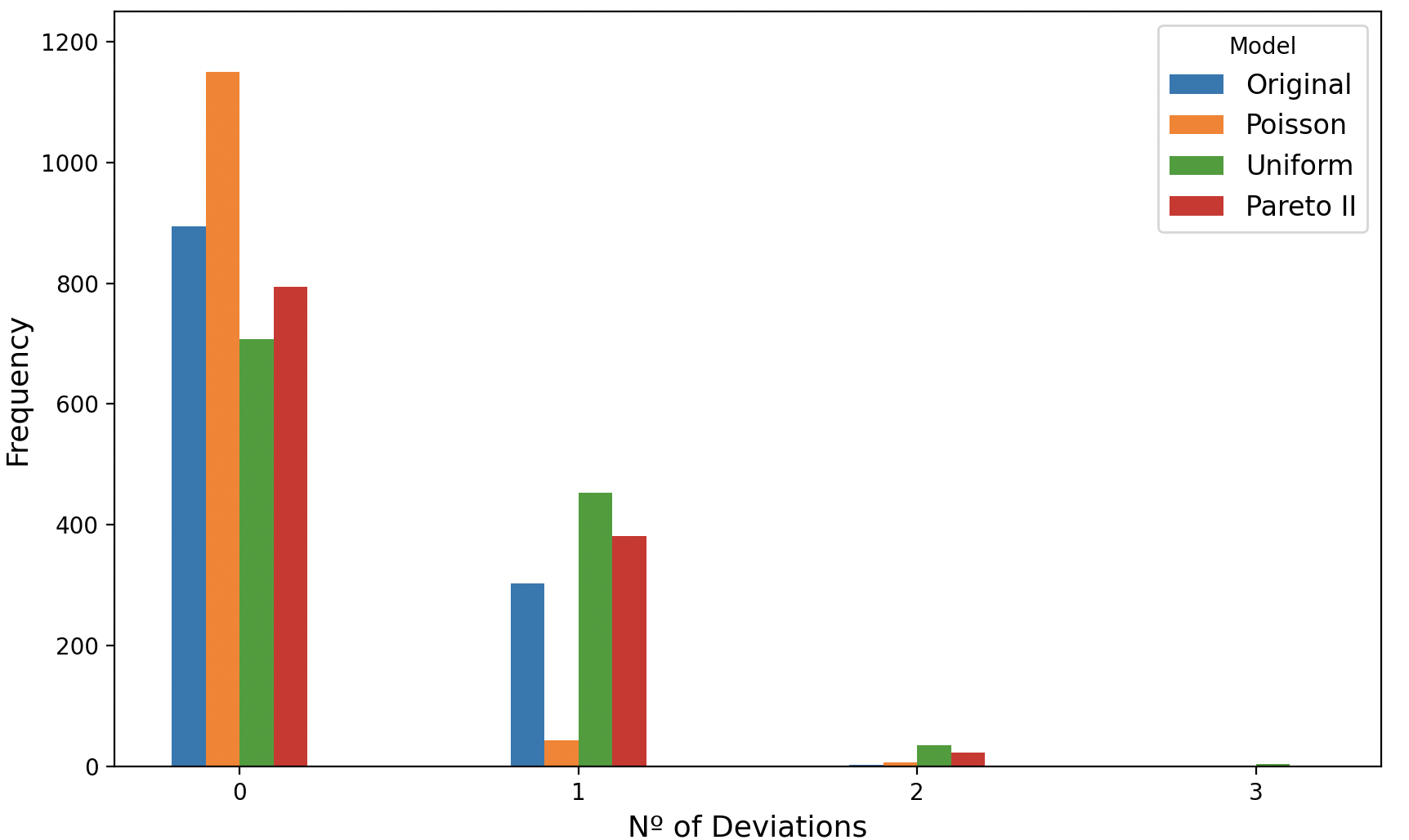}
        \caption{TPP3}
    \end{subfigure}
    \caption{Distribution of absolute deviations magnitude in gender marginals for the $3$-proportional methods. No deviations in other marginals were ever required. We conduct a simulation of $N=1200$ random vote instances from 4 different distributions and compute the frequency of the deviations for each method and instance.}
    \label{fig:deviations}
\end{figure}

\section{Discussion}

In this work, we have studied apportionment methods based on multidimensional proportionality from an applied perspective, establishing practically relevant theoretical results and proposing concrete methods capable of combining this notion with other desirable features of apportionment systems.

As our theoretical contribution, we have established two results that support the usage of multidimensional proportional apportionment methods as proposed by \citet{cembrano2022multidimensional}. We first showed that, in the fractional case, this notion is the only one satisfying three widely studied axioms, thereby extending a result for two dimensions \citep{balinski1989axiomatic} and justifying the corresponding notion of integral proportionality. We have then proved that the addition of bounds on the entries of the apportionment tensor does not affect the existence of approximately proportional apportionments, which is particularly relevant for adding plurality constraints on top of the basic method.

Regarding experimental results, we have used the Chilean Constitutional Convention election as a testing ground to analyze the outcomes of four methods based on multidimensional proportionality in terms of political representation, representativeness, and how well they accomplish the ideal of \textit{one person, one vote}. These methods consist of the basic method proposed by \citet{cembrano2022multidimensional} and three methods incorporating one or two additional features: a threshold on the number of votes that a list requires to receive any seats and the election of the most-voted candidate of each district.

Regarding proportionality in political representation, our results highlight a natural trade-off between global and local representation: Local methods (such as the one used in the election) generate global distortion in list allocation, while global methods based on multidimensional proportionality generate district distortion. In this trade-off, the incorporation of plurality constraints presents a middle ground, as highly-voted candidates who run as part of local projects are not excluded due to obtaining too few votes on a national scale. When considering the average vote of elected candidates, this method exhibits the best performance, while the threshold slightly lowers the average votes. Regarding voting power among districts, we observe that the ratio between votes and seats varies considerably across districts. This effect is attenuated in methods based on multidimensional proportionality, as the allocation of seats to lists is mainly given by the national amount of votes a list gets. Overall, both the threshold and the plurality constraints achieve their respective goals of favoring larger political coalitions and incorporating some degree of local representation. The cost of the former, in terms of proportionality and representativeness, seems to be larger than that of the latter in our simulations, though the exact way in which they are applied may of course have different impacts.

The presented methods can be valuable and feasible proposals for the election of representative bodies. Proportional or mixed-member proportional representation systems are two families of electoral systems widely used worldwide, and some countries have also started to use biproportional apportionment methods \citep{maier2010divisor,pukelsheim2017proportional}. Our theoretical and applied results lay the groundwork for incorporating features such as representation threshold---to favor the conformation of broader political projects---or the election of the most-voted candidate in every district---to enhance local representation---into methods based on multidimensional proportionality. In particular, a method incorporating both features such as TPP3 may constitute an appealing alternative to balance local and global proportionality, representativeness, and voting power.

Although multidimensional proportional methods may appear harder to understand for citizens at first glance, all methods incorporating additional constraints such as gender balance suffer from this problem, and the concepts of proportionality, gender balance, and plurality directly imposed in the proposed methods are certainly intuitive in contrast with the somehow subtle corrections made in the actual method used in the election of the Chilean Constitutional Convention. It is also important to mention that even though the ease of understanding constitutes a relevant element in terms of the legitimacy of the electoral process, a proper description and an appropriate performance of the mechanism do as well.
Moreover, $3$-proportional methods are easily extended to more dimensions to include, for instance, ethnicity as a fourth dimension and thus allow ethnic seats to be proportionally assigned across political, geographical, and gender distribution.
We remark that methods based on multidimensional proportionality are computationally efficient, both theoretically and in practice. Furthermore, the observed deviations from the prescribed marginals required to implement these methods are in most cases zero or very small.

All these facts consolidate the possibility of thinking deeper about electoral methods incorporating several dimensions and constraints. This future work certainly requires an interdisciplinary approach in order to succeed in designing mechanisms able to better represent the complexity and diversity of modern societies. 
For example, this work does not account for the effects of the new rules on the voters' and parties' strategic behavior. Another direction concerns the relationship of the electoral rules with the legitimacy of the elected representative bodies.
From a mathematical and computational perspective, this work shows that optimization and algorithmic tools are valuable for designing and testing new electoral methods for increasingly complex societies, and motivates the development of further approaches. 

\bibliographystyle{abbrvnat}
\bibliography{bibliography}

\newpage
\appendix

\section{Tables}\label{app:tables}

\begin{table}[htbp]
\centering
\begin{tabular}{|c|r|r|r|r|r|r|r|}
\hline
\multicolumn{1}{|c|}{\textbf{}} &
  \multicolumn{1}{c|}{\textbf{Votes}} &
  \multicolumn{1}{c|}{\textbf{CCM}} &
  \multicolumn{1}{c|}{\textbf{BPM}} &
  \multicolumn{1}{c|}{\textbf{TPM}} &
  \multicolumn{1}{c|}{\textbf{TPM3}} &
  \multicolumn{1}{c|}{\textbf{TPP}} &
  \multicolumn{1}{c|}{\textbf{TPP3}} \\ \hline
\textbf{XP}   & 22.8   & 26.8  & 26.8  & 23.9  & 25.4  & 23.2  & 24.6  \\ \hline
\textbf{YQ}   & 20.7   & 20.3  & 20.3  & 21.7  & 23.2  & 21.0  & 22.5  \\ \hline
\textbf{LP}   & 18.3  & 19.6  & 19.6  & 19.6  & 20.3  & 18.8  & 19.6  \\ \hline
\textbf{YB}   & 16.0  & 18.1  & 18.1  & 16.7  & 18.1  & 16.7  & 17.4  \\ \hline
\textbf{INN}  & 8.7   & 8.0   & 8.0   & 8.7   & 9.4   & 8.7   & 9.4   \\ \hline
\textbf{XA}   & 3.8   & 0.0   & 0.0   & 3.6   & 3.6   & 3.6   & 3.6   \\ \hline
\textbf{ZR}   & 1.0   & 0.0   & 0.0   & 0.7   & 0.0   & 0.7   & 0.0   \\ \hline
\textbf{ZB}   & 0.8   & 0.0   & 0.0   & 0.7   & 0.0   & 0.7   & 0.0   \\ \hline
\textbf{YK}   & 0.8   & 0.7   & 0.7   & 0.7   & 0.0   & 0.7   & 0.0   \\ \hline
\textbf{YU}   & 0.8   & 0.0   & 0.0   & 0.7   & 0.0   & 0.7   & 0.0   \\ \hline
\textbf{ZL}   & 0.8   & 0.0   & 0.0   & 0.7   & 0.0   & 0.7   & 0.0   \\ \hline
\textbf{T}    & 0.7   & 0.7   & 0.7   & 0.7   & 0.0   & 0.7   & 0.0   \\ \hline
\textbf{YX}   & 0.7   & 0.0   & 0.0   & 0.7   & 0.0   & 0.7   & 0.0   \\ \hline
\textbf{ZW}   & 0.7   & 0.7   & 0.7   & 0.7   & 0.0   & 0.0   & 0.0   \\ \hline
\textbf{IND9} & 0.7   & 0.7   & 0.7   & 0.0   & 0.0   & 0.7   & 0.7   \\ \hline
\textbf{XI}   & 0.7   & 0.7   & 0.7   & 0.0   & 0.0   & 0.0   & 0.0   \\ \hline
\textbf{ZZ}   & 0.6   & 0.7   & 0.7   & 0.0   & 0.0   & 0.7   & 0.7   \\ \hline
\textbf{WB}   & 0.5   & 0.7   & 0.7   & 0.0   & 0.0   & 0.0   & 0.0   \\ \hline
\textbf{P}    & 0.4   & 0.7   & 0.7   & 0.0   & 0.0   & 0.0   & 0.0   \\ \hline
\textbf{A}    & 0.2   & 0.7   & 0.7   & 0.0   & 0.0   & 0.0   & 0.0   \\ \hline
\textbf{XM}   & 0.2   & 0.7   & 0.7   & 0.0   & 0.0   & 0.7   & 0.7   \\ \hline
\textbf{IND1} & 0.1   & 0.0   & 0.0   & 0.0   & 0.0   & 0.7  & 0.7  \\ \hline
\end{tabular}
\caption{Proportion of votes obtained by list, and proportion of seats obtained by least under each method.}
\label{tab:votes-bylist}
\end{table}

\begin{table}[htbp]
\centering
\begin{tabular}{|c|r|r|r|r|r|}
\hline
 &
  \multicolumn{1}{c|}{\textbf{CCM}} &
  \multicolumn{1}{c|}{\textbf{TPM}} &
  \multicolumn{1}{c|}{\textbf{TPM3}} &
  \multicolumn{1}{c|}{\textbf{TPP}} &
  \multicolumn{1}{c|}{\textbf{TPP3}} \\ \hline
\textbf{1}  & 0.22 & 0.24 & 0.24 & 0.28 & 0.28 \\ \hline
\textbf{2}  & 0.15 & 0.29 & 0.29 & 0.29 & 0.29 \\ \hline
\textbf{3}  & 0.23 & 0.24 & 0.32 & 0.20 & 0.13 \\ \hline
\textbf{4}  & 0.13 & 0.19 & 0.19 & 0.19 & 0.19 \\ \hline
\textbf{5}  & 0.06 & 0.14 & 0.14 & 0.14 & 0.14 \\ \hline
\textbf{6}  & 0.08 & 0.11 & 0.14 & 0.11 & 0.14 \\ \hline
\textbf{7}  & 0.08 & 0.08 & 0.08 & 0.08 & 0.08 \\ \hline
\textbf{8}  & 0.11 & 0.12 & 0.20 & 0.12 & 0.20 \\ \hline
\textbf{9}  & 0.12 & 0.20 & 0.20 & 0.17 & 0.17 \\ \hline
\textbf{10} & 0.09 & 0.10 & 0.17 & 0.10 & 0.17 \\ \hline
\textbf{11} & 0.14 & 0.14 & 0.27 & 0.14 & 0.27 \\ \hline
\textbf{12} & 0.11 & 0.11 & 0.16 & 0.11 & 0.16 \\ \hline
\textbf{13} & 0.32 & 0.19 & 0.19 & 0.19 & 0.19 \\ \hline
\textbf{14} & 0.08 & 0.18 & 0.18 & 0.18 & 0.18 \\ \hline
\textbf{15} & 0.10 & 0.22 & 0.21 & 0.26 & 0.21 \\ \hline
\textbf{16} & 0.08 & 0.08 & 0.08 & 0.08 & 0.08 \\ \hline
\textbf{17} & 0.08 & 0.18 & 0.18 & 0.18 & 0.18 \\ \hline
\textbf{18} & 0.17 & 0.26 & 0.17 & 0.17 & 0.17 \\ \hline
\textbf{19} & 0.15 & 0.18 & 0.18 & 0.18 & 0.18 \\ \hline
\textbf{20} & 0.09 & 0.17 & 0.20 & 0.20 & 0.20 \\ \hline
\textbf{21} & 0.11 & 0.20 & 0.11 & 0.20 & 0.11 \\ \hline
\textbf{22} & 0.28 & 0.28 & 0.18 & 0.28 & 0.18 \\ \hline
\textbf{23} & 0.09 & 0.11 & 0.09 & 0.11 & 0.15 \\ \hline
\textbf{24} & 0.20 & 0.21 & 0.21 & 0.21 & 0.21 \\ \hline
\textbf{25} & 0.28 & 0.17 & 0.17 & 0.17 & 0.17 \\ \hline
\textbf{26} & 0.13 & 0.23 & 0.23 & 0.23 & 0.23 \\ \hline
\textbf{27} & 0.16 & 0.29 & 0.29 & 0.29 & 0.29 \\ \hline
\textbf{28} & 0.24 & 0.25 & 0.25 & 0.24 & 0.24 \\ \hline
\end{tabular}
\caption{One-dimensional Gallagher index computed for each district.}
\label{tab:GI-by-dist}
\end{table}

\end{document}